\documentclass[12pt, a4paper,reqno]{amsart}

\usepackage[margin=1in]{geometry}
\usepackage{amssymb}
\usepackage{enumerate}
\usepackage{thmtools}
\usepackage{hyperref}
\usepackage{doi,url}
\usepackage{dsfont}
\usepackage{mathabx}
\usepackage[normalem]{ulem}
\usepackage{cancel}
\usepackage{bm}

\usepackage[dvipsnames]{xcolor}
\usepackage{tikz}
\usepackage{pgfplots}
\usepgfplotslibrary{fillbetween}
\pgfplotsset{compat=1.16,width=10cm}
 \usepackage{caption}
\usepackage{float}
\usetikzlibrary{patterns}

\definecolor{purple}{rgb}{.5,0,1}
\definecolor{orange}{rgb}{1,.5,0}
\definecolor{pink}{rgb}{1,0,.5}

\usepackage{fancyvrb}
\usepackage{amsmath,amsthm}
\usepackage{enumitem}

\usepackage[
  style=numeric,
  sorting=nyt,
  maxbibnames=99,
  giveninits=true,
  url=true,
  doi=true,
  backend=biber
]{biblatex}

\renewbibmacro{in:}{}
\AtEveryBibitem{\clearfield{number}}
\DeclareFieldFormat*{volume}{\mkbibbold{#1}}
\DeclareFieldFormat[article]{pages}{#1}

\numberwithin{equation}{section}

\newtheorem{theorem}{Theorem}[section]
\newtheorem{proposition}[theorem]{Proposition}
\newtheorem{lemma}[theorem]{Lemma}

\newtheorem{remark}[theorem]{Remark}

\newtheorem*{theorem*}{Theorem}
\newtheorem*{lemma*}{Lemma}
\newtheorem*{remark*}{Remark}

\DeclareMathOperator{\supp}{supp}

\DeclareMathOperator{\dist}{dist}

\renewcommand\H{\mathcal{H}}

\newcommand\R{\mathbb R}
\newcommand\N{\mathbb N}
\newcommand\C{\mathbb C}
\newcommand\Z{\mathbb Z}

\newcommand\cW{\mathcal{W}}

\newcommand\e{\mathrm{e}}

\renewcommand\P{\mathbb P}
\newcommand\E{\mathbb E}
\newcommand\cE{\mathcal{E}}

\newcommand\cN{\mathcal{N}}

\newcommand\cX{\mathcal{X}}

\newcommand\cH{\mathcal{H}}

\newcommand\vphi{\varphi}

\renewcommand{\d}{\mathrm{d}}

\newcommand{\pr}{\prime}

\newcommand\what{\widehat}

\newcommand{\bom}{{\boldsymbol{{\omega}}}}

\newcommand\beq{\begin{equation}}
\newcommand\eeq{\end{equation}}
\newcommand\be{\begin{equation}\begin{aligned}}
\newcommand\ee{\end{aligned}\end{equation}}

\newcommand{\abs}[1]{\left\lvert #1 \right\rvert}
\newcommand{\norm}[1]{\left\lVert #1 \right\rVert}

\newcommand{\set}[1]{\left\{ #1 \right\}}
\newcommand{\pa}[1]{\left( #1 \right)}

\newcommand{\fl}[1]{\left\lfloor #1 \right\rfloor}

\newcommand{\cl}[1]{\lceil #1 \rceil}

\newcommand\La{\Lambda}

\newcommand{\eq}[1]{\eqref{#1}}

\newcommand{\up}[1]{^{\left(#1\right)}}

\newcommand{\qtx}[1]{\quad\text{#1}\quad}
\newcommand{\mqtx}[1]{\; \ \text{#1}\; \  }
\newcommand{\sqtx}[1]{\;\text{#1}\;}

\newcommand{\bD}{\bm{\Delta}}

\newcommand{\tfd}{\pa{1- \tfrac{1}{\Delta}}}
\newcommand{\fd}{1- \frac{1}{\Delta}}
\newcommand{\sfd}{1- \tfrac{1}{\Delta}}
\newcommand{\nfd}{\pa{1- \frac{1}{\Delta}}}

\newcommand{\bpartial}{\bm{ \partial}}

\newcommand{\bmo}{\bm{\omega}}

\newcommand{\prr}{{\pr\pr}}

\newcommand{\ident}{\mathds{1}}

\begin{document}

\title[MBL  for the random XXZ chain in  fixed intervals]{Many-body localization  for the random XXZ spin chain in fixed energy intervals}
\author{Alexander Elgart}
\address[A. Elgart]{Department of Mathematics, Virginia Tech; Blacksburg, VA, 24061, USA}
 \email{aelgart@vt.edu}

\author{Abel Klein}
\address[A. Klein]{Department of Mathematics, University of California, Irvine;   
Irvine, CA 92697,  USA}
 \email{aklein@uci.edu}

\thanks{A.E. was  supported in part by the National Science Foundation under grant DMS-2307093.}

\date{Version of \today}

\begin{abstract}
 It is shown that  the infinite random Heisenberg XXZ spin-$\frac12$ chain exhibits slow propagation of information  (logarithmic light cone), a   key signature of many-body localization (MBL),   in any fixed energy interval at the bottom of the spectrum.
 The relevant parameter regime, which covers both weak interaction and strong disorder, is determined solely by the  energy interval.
 \end{abstract}

\keywords{Many-body localization, MBL, random XXZ spin chain, slow propagation of information}

\subjclass[2000]{82B44, 82C44, 81Q10, 47B80, 60H25}

\maketitle

\tableofcontents

\section{Introduction}

 In this paper we   prove that the infinite random Heisenberg XXZ spin-$\frac12$ chain exhibits slow propagation of information, in the form of a logarithmic light cone, within any fixed energy interval at the bottom of the spectrum. Slow propagation of information  is considered a primary signature of many-body localization (MBL). The applicable parameter regime encompasses both strong disorder and weak interaction, and  depends only the fixed energy interval.

A finite spin-$\frac 12$ chain is given by a self-adjoint operator $H_L$ in a finite dimensional tensor product Hilbert space $\mathcal H_L=\otimes_{i=-L}^L \C_i^2$, where $\C_i^2$ indicates the $i$-th copy of $\C^2$ (the $i$-th spin-$\frac 12$)  and the parameter $L\in\N$ describes the chain's length. Physically, it represents a one-dimensional array of spins at sites $-L,\ldots,L$. The infinite analog of $H_L$ is $H_\Z$, acting on the (properly defined) Hilbert space $\mathcal H_\Z \simeq  \otimes_{i\in\Z} \C^2_i$. Such systems appear frequently in statistical and condensed matter physics and can be traced back to Heisenberg famous 1928 work, where he  introduced a spin model (named after him) to explain ferromagnetism. More recently, spin chains have become prototypical models in quantum information theory and quantum computing, among many other areas. The behavior of such systems (including a random variant of the Heisenberg XXZ spin chain we consider here) are generally poorly understood, even numerically, unless they are exactly solvable. One of the main difficulties lies in the exponentially fast growth of the dimension of $\mathcal H_L$ in $L$.

Starting from the mid-2000s, random spin chains emerged as prototypical models to determine whether many-body systems can exhibit an exotic phase of matter known as many-body localization (MBL). For single particle systems, it is known that randomness can cause a drastic change in material's dynamical properties -- it can lead to non-propagation  of the wave packets that are initially spatially localized. This effect was posited by Anderson in 1958 and is known as Anderson localization. It has been extensively studied mathematically since the 1980s and its mechanism is relatively well understood by now. 
However, it is not known whether the Anderson localization is a stable phase of the matter, in particular whether it survives the effects of particle-particle interaction, which is inherent in the realistic description of physical samples. 

The single-particle picture provides a satisfactory description of a many-body system's behavior only if the quantum data related to a single-particle observable does not show significant propagation to other particle sectors in the course of its dynamical evolution. The phenomenon of  non-spreading, or at least remarkably slow spreading of information, is considered one of the defining signatures of MBL. Disordered quantum spin chains are widely believed to be the most likely physical setting in which this type of restricted propagation can occur. A robust debate about the true existence and stability of the MBL phase has been actively taking place in the field of condensed matter physics for the last two decades, with substantial arguments presented both in favor of and against it.

MBL is usually associated with the localization of states with
a finite energy density.. In particular, when the system’s size grows to infinity (the thermodynamic limit), one is required to consider energy intervals that also grow to infinity. In this paper, we prove the weaker statement that MBL, manifested in the form of  slow propagation of information, occurs within fixed energy intervals at the bottom of the spectrum of the random ferromagnetic Heisenberg XXZ spin chain.

Localization of the ground state and  of a small number of excited states is commonly known in physics as   zero-temperature localization.  Our results on  localization in a fixed energy interval at the bottom of the spectrum therefore occupy  intermediate position between this zero-temperature localization regime and the full MBL regime. In the subsequent section, we introduce the necessary notation, formulate our result, and  provide a comparison with existing findings.

\section{The random XXZ spin chain and the  main result}
\subsection{Setting}
A single spin-$\frac 12$ chain is  a two-dimensional complex state space $\C^2$ with the standard basis consisting of the spin-up  $\uparrow\rangle = {\begin{bmatrix} \scriptstyle 1  \\ \scriptstyle 0  \end{bmatrix}} $ and spin-down $\downarrow\rangle = {\begin{bmatrix} \scriptstyle 0  \\ \scriptstyle 1 \end{bmatrix}} $ states.  We will denote by $\sigma^x, \sigma^y, \sigma^z$  the three Pauli matrices and by $\sigma^\pm=\frac 12( \sigma^x \pm i \sigma^y)$ the ladder operators.    We  interpret $\downarrow\rangle$ as a particle, with  $\mathcal{N} = \tfrac{1}{2} (\ident-\sigma^z)$,  the projection onto the spin-down state, called the  (local) number operator on $\C^2$.

Spin chains are arrays of  spins indexed by subsets $\La\subset\Z$. If $\La$ is finite, the corresponding state space is the tensor product Hilbert space $\mathcal H_\Lambda=\otimes_{i\in \Lambda} \cH_i$, where each $\cH_i$ is a copy of $\C^2$.  Given a vector $v\in \C^2$, we denote by $v_i$ its copy in $\cH_i$. 
 For infinite $\La$, we let  $\cH_{\La,0}$ be the vector subspace of $\bigotimes_{i\in \La}\cH_i$ spanned by    tensor products of the form  $\bigotimes_{i\in \La}\vphi_i$,  $\vphi_i\in \set{\uparrow\rangle_i,\downarrow\rangle_i}$, with a finite number of spin-downs, equipped with the  standard inner product on a tensor product space,  and let $\cH_\La$ be its Hilbert space completion.  A single spin operator $\sigma^\sharp$ acting on the $i$-th spin is lifted to  $\mathcal H_\Lambda$  by identifying it with  $\sigma^\sharp \otimes \mathds{1}_{\La\setminus \set{i}}$, where $ \mathds{1}_{\La\setminus \set{i}}$ denotes the identity operator on $\cH_{\La\setminus \set{i}}$, i.e.,   it acts non-trivially only in the tensor product's $i$-th component. To stress the $i$-th dependence, we will denote this single spin operator by $\sigma^\sharp_i$. More generally,  if  $S\subset \La$, and $A_S$ is an operator on  $\cH_S$, we often  identify it with its natural embedding on $\cH_\La$, namely with  $A_S \otimes \mathds{1}_{\La\setminus S}$.

The random XXZ spin chain  on $\Z$   is given by  the Hamiltonian
  $ H=H_{\bmo} = H_0 +  \lambda V_{\bmo}$, where:
\begin{enumerate}
\item $H_0$ is the  disorder-free  Hamiltonian, given by 
\be
H_0 =  \sum_{i\in\Z} - \mathcal{N}_{i}\mathcal{N}_{i+1} -\tfrac{1}{2\Delta}\pa{\sigma_i^+\sigma_{i+1}^-+\sigma_i^-\sigma_{i+1}^+} +  \cN_i.
\ee
Here  $\Delta>1$ is the anisotropy   parameter, specifying the Ising phase ($\Delta=1$ selects the Heisenberg chain and  $\Delta=\infty$ corresponds to the  Ising chain).
\item    $\lambda >0$ is the disorder parameter and $V_{\bmo}$ is the random field  given by
\be
V_{\bmo}=\sum_{i\in\Z} \omega_i \mathcal{N}_i,
\ee
where
 $\bmo = \set{\omega_i}_{i\in\Z}$  is a family of independent identically distributed random variables whose  common probability  distribution $\mu$  is  absolutely continuous with a bounded density  and satisfies
\beq\label{mu}
  \set{0,1}\subset \supp \mu\subset[0,1] .
  \eeq 
\end{enumerate}
$H_{\bmo}$ is a self-adjoint operator on $\cH_\Z$ for a fixed $\bmo \in [0,1]^\Z$;  it is essentially self-adjoint on $\cH_{\Z,0}$.  Its spectrum satisfies
\be
\sigma(H_{\bmo}) =\set{0}\cup \pa{[1-\tfrac1\Delta,\infty)\cap \sigma(H_{\bmo})}.
\ee
The ground state energy $0$ is a simple eigenvalue of $H_{\bmo}$ with the corresponding eigenvector $\phi_\emptyset= \bigotimes_{i\in \Z}\uparrow\rangle_i$.

We will generally omit $\bom$ from the notation for the random operator, e.g. we simply write $H$ for $H_{\bmo}$, and always assume $\bmo \in [0,1]^\Z$.    It follows from the usual ergodicity arguments that
 \be
 \sigma(H)=\set{0}\cup[1-\tfrac1\Delta,\infty) \quad\text{with probability one}. 
 \ee

A {\it local observable} is a bounded operator $T$ on $\cH_\Z $  supported by a finite subset $\cX$ of $\Z$, that is,
 $T=T_{\cX}
\otimes \mathds{1}_{\mathcal H_{\Z\setminus \cX}}$, where $T_{\cX}$ is a bounded operator on $\cH_\cX$.  As already mentioned, we will usually identify $T$ with $T_{\cX}$. We call $\cX$ the {\it support} for $T$; note that the support  of $T$ is not unique.

Slow propagation of information is in reference to the {\it Heisenberg time evolution} of observables: Given the initial observable $T$, its evolution  is given by  $\tau_t(T)=\e^{itH}T\e^{-itH}$   at time $t$. In this paper, we restrict the time evolution to the energy window $[0,E]$, that is, we are interested in the behavior of the  random operator $S(T,t,E)= P_{E}\pa{\e^{itH}T\e^{-itH}}P_{E}$,  where     $P_{E}=\chi_{ (-\infty,E]}(H)=\chi_{ [0,E]}(H) $ is the Fermi projection.  The slow propagation of information is then expressed as the fact that for any local observable $T$ we can approximate $S(T,t,E)$ by $P_{E}T_t P_{E}$ for some local observable $T_t$, where the support of $T_t$ is not significantly larger than the support of $T$. The size of the support depends on the precision of the approximation and (weakly) on $t$. To quantify the growth of  the support  it is convenient to introduce the following notation:   Given  $ S\subset \Z$ and  $p\in\N^0$,   we set $[S]_p = \set{x\in\Z: \dist\pa{x,S}\le p}$.  Denoting by $\E$ the expectation with respect to the random variables, and setting $\langle t\rangle=1 + \abs{t}$, our main result is

\begin{theorem}[ Slow propagation of information-infinite volume]\label{thm:main}
 Fix parameters $\Delta_0>1$ and  $ \lambda_0 >0$.   Then for all   $E\ge 0$  there exist strictly positive constants $C_E, D_E,  \kappa_E, m_E$ (depending on $E$, $\Delta_0$, $ \lambda_0$) such that for all $\Delta \ge \Delta_0$ and $\lambda \ge \lambda_0$ with  $\lambda \Delta^2\ge D_E$ the following assertion holds: Given an observable $T$  supported on a finite  interval $\mathcal X\subset \Z$ with $\norm{T}\le 1$ and a scale $\ell\in \N$, for all $t\in \R$ there exists a  random observable ${T}_t={T}(\bmo,t,E,\ell)$, supported on  $ [\cX]_{\ell}$, such that
\be\label{eq:main}
\E\norm{ P_{E} \pa{\e^{itH}T\e^{-itH}-T_t} P_{E}}\le    C_{E} \langle t\rangle^{\kappa_E} \e^{- m_{E}\, \ell}.
\ee
 \end{theorem}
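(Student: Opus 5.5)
The proof will go through finite volume first and then pass to infinite volume. I will prove a finite-volume version of the statement for $H_L$ on $\La_L=[-L,L]$, with constants uniform in $L$, and then let $L\to\infty$ using strong resolvent convergence of $H_L$ to $H$.

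The finite-volume core is localization in the energy window $[0,E]$. By particle number conservation, $H_L$ is a direct sum over $N$-particle sectors. In the Ising-type regime $\lambda\Delta^2\ge D_E$, an eigenvector with energy at most $E$ can contain only a bounded number of "clusters" of down spins: each cluster costs energy at least $1-\tfrac1\Delta\ge 1-\tfrac1{\Delta_0}$, plus a random field contribution $\lambda\omega_i$. So the spectral subspace $\operatorname{Ran}P_E$ is, up to exponentially small errors, spanned by configurations with at most $K_E\approx E/(1-\tfrac1{\Delta_0})$ clusters. Within that sector I will establish a fractional moment (or multiscale) bound on the resolvent of the form
\[
\E\,\abs{\langle \delta_x,(H_L-z)^{-1}\delta_y\rangle}^s\le C\,\e^{-m\,d_{\mathrm{HD}}(x,y)}
\]
uniformly in $\Re z\le E'$ for some $E'>E$. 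Here $d_{\mathrm{HD}}$ is a Hausdorff-type distance between particle configurations, which ignores the positions of far-away rigid clusters. The ingredients are the Schur complement (Feshbach) reduction onto droplet configurations, the small hopping $\tfrac1{2\Delta}$, and the large disorder $\lambda$, which together make a Combes–Thomas / Aizenman–Molchanov argument work. This is where the condition $\lambda\Delta^2\ge D_E$ enters: second-order hopping $\Delta^{-2}$ must beat the level-spacing penalty $\lambda^{-1}$.

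From these resolvent bounds I will derive eigencorrelator localization:
\[
\E\sup_{\abs{f}\le1}\norm{\chi_{\cX}\,f(H_L)P_E\,\chi_{\cY}}\le C\e^{-m\dist(\cX,\cY)},
\]
where $\chi_\cX$ projects onto configurations with a particle in $\cX$. I will get this via a smooth cutoff of $P_E$ and the Helffer–Sjöstrand formula combined with the fractional moment bounds. Next I decompose $T$ using the fact that $T$ acts trivially outside $\cX$. The candidate observable is $T_t=\operatorname{tr}_{\Z\setminus[\cX]_\ell}$-normalized partial trace of $\e^{itH_{[\cX]_\ell}}T\e^{-itH_{[\cX]_\ell}}$ restricted appropriately, or more simply $T_t=P^{(\ell)}\e^{itH_{[\cX]_\ell}}T\e^{-itH_{[\cX]_\ell}}P^{(\ell)}$ localized near the window. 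The error $P_E(\tau_t(T)-T_t)P_E$ then splits into two parts. The first consists of terms where the boundary hopping between $[\cX]_\ell$ and its complement acts; these are controlled by the eigencorrelator decay $\e^{-m\ell}$ times a Duhamel factor $\langle t\rangle$. The second comes from leakage out of the energy window. Energy truncation arguments introduce polynomial factors in $t$, which produce $\langle t\rangle^{\kappa_E}$.

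The main obstacle is the localization estimate in the fixed window uniformly in the number of particles: infinitely many particle sectors contribute, and low-energy states may contain many particles in a few large clusters. My plan is to use the droplet structure, namely that the energy of a configuration is roughly the number of clusters times $(1-\tfrac1\Delta)$ plus $\lambda\sum\omega_i$. Large clusters are effectively frozen because their internal motion is suppressed by the gap, and their boundary hopping is of order $\Delta^{-1}$. Handling the resulting non-locality in configuration space, with an effective dimension depending only on $K_E$, is the technically hardest step. I would try an induction on the number of clusters, treating each additional cluster as a perturbation separated by an energy gap.
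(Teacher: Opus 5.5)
There is a genuine gap. Your argument needs a finite-volume light-cone estimate with constants uniform in $L$. That is exactly what is not available: the finite-volume result of \cite{EK24}, Theorem~\ref{thm:localmodell}, carries a factor $\abs{\La}^{\rho_q}$, and none of the steps you sketch removes it.

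\emph{The eigencorrelator bound is false as stated.} With $\chi_\cX=P_-^\cX$, it fails as soon as $[0,E]$ contains two-cluster states. Take $x\in\cX$ and $y\in\cY$ with $\abs{x-y}\ge 2$. Then $\langle\phi_{\set{x,y}},H\phi_{\set{x,y}}\rangle=2+\lambda(\omega_x+\omega_y)$, and the energy variance of $\phi_{\set{x,y}}$ is $\Delta^{-2}$. Consider the event $\omega_x,\omega_y\le\frac1{2\lambda}$, which has probability $\mu([0,\frac1{2\lambda}])^2>0$ independently of $\dist(\cX,\cY)$. On it, Chebyshev gives $\norm{P_-^\cX P_E P_-^\cY}\ge\norm{P_E\phi_{\set{x,y}}}^2\ge\frac34$, e.g.\ for $E\ge5$. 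What is true is decay in the Hausdorff distance between configurations \cite{EK25}. As the paper's introduction notes, it is not known how to extract a logarithmic light cone from that.

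\emph{The Duhamel step is circular.} With $K=[\cX]_\ell$ one has $\tau_t(T)-\tau^{\upharpoonright K}_t(T)=i\int_0^t\tau_s\pa{[\Gamma^{\upharpoonright K},\tau^{\upharpoonright K}_{t-s}(T)]}\,\d s$. Here $\tau^{\upharpoonright K}_{t-s}(T)$ evolves with no energy restriction, so only the linear Lieb--Robinson cone controls it. It also does not commute with $P_E$, so no localization bound can be inserted. Showing that this term is small after projection is the theorem itself.

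\emph{The cluster induction misses the real obstacle.} ``Induction on the number of clusters'' does not address where the volume dependence actually comes from: union bounds over the whole volume in the large-deviation estimates (cf.\ the factor $\abs{\La}^{2k+15}$ in Proposition~\ref{lemEkR}).

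The missing idea is that volume-uniform bounds are unnecessary. The paper applies Theorem~\ref{thm:localmodell} on a box $[\cX]_{9j\xi}$ with $\xi\simeq\max\set{\ell,\abs{t}}$. The factor $\abs{\La}^{\rho_q}$ then becomes polynomial in $\langle t\rangle$ and $\ell$ and is absorbed. The reduction to that box rests on three facts:
\begin{enumerate}
\item[(i)] By Lemma~\ref{modLemma}, under $P_{I_{\le p}}$ the $\cl{p}+5$ disjoint shells $\partial_{3\xi}[\cX]_{9i\xi}$ cannot all contain particles, up to probability $\mathrm{poly}(\xi)\e^{-c\xi}$. So one may insert $P_+$ of a particle-free shell.
\item[(ii)] Lemma~\ref{lemfxi} replaces $\chi_{[0,E]}$ by $\tilde f_\xi$, whose Fourier transform is supported in $\abs{u}\le\frac\Delta8\xi$. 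This makes $\tilde f_\xi(H)$ a superposition of $\e^{iuH}$ with $\abs{u}\lesssim\xi$.
\item[(iii)] Lemma~\ref{lemB1} gives $\norm{P_-^A\e^{isH}P_+^{[A]_r}}\le C\e^{-r/5}$ for $\abs{s}\le\frac\Delta4 r$.
\end{enumerate}
Since $\Gamma^{\upharpoonright K}=P_-^{\partial K}\Gamma^{\upharpoonright K}P_-^{\partial K}$, Duhamel together with (iii) shows the following. Between $P_+$ projections onto the empty shell, $\tilde f_\xi(H)\tau_t(T)\tilde f_\xi(H)$ and its decoupled analogue for $K=[\cX]_{9j\xi}$ agree up to $C\langle t\rangle\e^{-c\xi}$. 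The boundary term is killed by an empty buffer plus finite propagation speed, not by localization. This empty-shell decoupling is what your plan lacks.
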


 \subsection{Logarithmic vs linear light cones}\label{subsec:cones}
 A consequence of the Lieb-Robinson bound \cite{LR,NachtSims} for a local spin Hamiltonian $H$  is that for any local observable $T$ supported on a finite  interval $\mathcal X\subset \Z$  with $\norm{T}\le 1$, given $\ell \in \N$ there exists an  observable $\mathcal T_{t}$, supported on $[\mathcal X]_\ell$, that satisfies 
\be\label{LR}\norm{\e^{itH}T\e^{-itH}-T_{t}}\le C\e^{-m(\ell-v|t|)},
\ee
where  $m>0$,  and $v>0$ is the velocity in the Lieb-Robinson bound.  For  systems defined on finite sets $\La\subset \Z$, this was originally  shown in \cite{BHV}, where the approximating observable $T_t$ was chosen to be the partial trace of $\e^{itH}T\e^{-itH}$ with respect to degrees of freedom  associated with the set $\La\setminus [\mathcal X]_\ell$. Consequently, the approximation in \cite{BHV} depends on the size of the system  since one averages over $\La\setminus [\mathcal X]_\ell$. 
A different proof is given in \cite[Theorem 3.4]{Young}, where $T_t=\e^{itH^{ [\mathcal X]_\ell}}T\e^{-itH^{ [\mathcal X]_\ell}}$ is volume independent, $H^{ [\mathcal X]_\ell}$ being an appropriately defined restriction of $H$ to the finite interval $ [\mathcal X]_\ell$. Although the result is stated and proven for systems defined on finite sets $\La\subset \Z$,
the proof is valid also for $\La=\Z$.

In physics, this result is interpreted as propagation of information   in the    {\it linear light cone}, that is,   at most linear in time propagation of information, up to  exponentially small corrections.  For  translation invariant systems, it is expected that this bound cannot be improved, i.e.,  information can indeed spread within the light cone. In contrast, for the random system, Theorem \ref{thm:main} establishes a much slower rate of spread of information. Specifically, information can only potentially escape the corresponding {\it logarithmic} light cone after a time $t\sim e^{c\ell}$ for a given value of $\ell$, which is significantly slower than the linear $t\sim \ell$ rate. Although the time dependence of our bound is suboptimal, it is widely anticipated that the random XXZ spin chain should still exhibit some degree of slow propagation (rather than none) as discussed, for example, in \cite{abanin2019}.  Nevertheless, in the specific energy interval known as the droplet spectrum  the time dependence can be suppressed completely  \cite{EKS3}.

\subsection{Comparison with existing results}
Up until a few years ago, the landscape of rigorous MBL-related results  was primarily confined to a small class of exactly solvable models (see \cite{ARNSS} for a comprehensive review). The latter are not expected to capture the full complexity and generality of the phenomenon in realistic disordered many-body systems.

More recently, progress has been made for the random XXZ  spin chain, a paradigmatic system often used to study MBL. Specifically, spectral and dynamical localization were rigorously established within the particular energy interval $[\fd, 2\tfd)$, termed the {\it droplet spectrum} \cite{BeW,EKS1}. Subsequent work has rigorously demonstrated other defining characteristics of MBL within this energy window. These include the essential features of quantum information scrambling and entanglement scaling in MBL systems: the exponential clustering properties for the associated many-body eigenfunctions,  the non-propagation of information \cite{EKS3}, and the area law for  the entanglement entropy \cite{BeW1}. 

The authors' subsequent work showed the persistence of spectral and eigenstate localization in the random XXZ spin chain for any fixed energy interval at the bottom of the spectrum. This was first  proved for finite systems with volume-dependent estimates \cite{EK22}, and subsequently for infinite systems \cite{EK25}. The latter paper also provides a proof of a weak form of dynamical localization, showing the exponential decay of eigencorrelators in Hausdorff distance. 
While the analog of this estimate implies non-spreading of wave packets for a single-particle system, its utility in establishing  the logarithmic light cone feature of MBL   remains  unclear.

 For finite systems, we  established this MBL feature  for any fixed energy interval   in \cite{EK24}.  Crucially, however, the volume-dependent bounds in \cite{EK24} prevented us from drawing conclusions about the system's behavior in the thermodynamic limit.  In the current paper, we overcome this limitation by completely removing the volume dependence, thereby proving slow propagation of information in infinite systems.
To do so, we devise a new way to approximate the infinite system by finite systems, overcoming the difficulties inherent in the tensor product nature of the Hilbert spaces.   We expect this approximation  to have applications in other contexts as well.

Finally, we also want to mention the works  \cite{DGHP} \and \cite{Lemm}. The former shows anomalous heat transport at high disorder in a chain defined by a diagonal random Hamiltonian  subjected to a local perturbation. The latter examines $d$-dimensional spin systems, also defined by a diagonal Hamiltonian (satisfying a certain non-resonance condition)  subjected to a local perturbation, and derives a Lieb-Robinson bound that couples the perturbation size to the time scale.

\subsection{Sketch of the proof of Theorem \ref{thm:main}}

Our starting point is  \cite[Theorem 2.6]{EK24}, a version of  Theorem \ref{thm:main} for the finite volume random XXZ spin chain.  (We state the version of  \cite[Theorem 2.6]{EK24} we use as Theorem \ref{thm:localmodell}     below.)  The main difficulty in  applying this theorem is that  the   estimate corresponding  to  \eq{eq:main}  includes a factor with polynomial dependence on the volume.

We prove a more detailed version of Theorem \ref{thm:main}, namely Theorem \ref{thm:main2}.  The main step in the proof consists in showing  that, given an initial observable $T$ supported by a finite interval $\cX$,   in the presence of randomness one can approximate, in expectation, the restriction  $ P_{E}\e^{itH}T\e^{-itH}P_{E}$ of the Heisenberg evolution  to the energy window $ [0,E]$ by $P^\La_{E}\e^{itH^\La}T\e^{-itH^\La}P^\La_{E}$, where  $H^\La$ is the restriction of $H$ to an appropriately chosen finite interval $\La \subset \Z$, and $P^\La_{E}=\chi_{ [0,E]}(H^\La)$. 

 Typically, the expression $f(H)T\bar f(H)$ cannot be approximated by $f(H^\La)T\bar f(H^\La)$ (as is evident from  taking $f(x)=x$). However, in our specific context, this approximation is possible when $f(x)= \chi_{[0,E]}(x)e^{itx}$  and the size of $\La$ is  proportional to $|t|$. The bulk of the work in this paper is dedicated to establishing this key result.

Once the effective size of the system is reduced to the linear light cone, we invoke our previous result for slow propagation of information in a finite volume, as stated in Theorem \ref{thm:localmodell}.  The polynomial dependence on the  volume of  \cite[Theorem 2.6]{EK24} is then absorbed into the polynomial dependence on time to produce Theorem \ref{thm:main}.

\section{The main result revisited}

\subsection{Notation and definitions}

 Given $\La \subset \Z$, the canonical (orthonormal) basis $\Phi_\La  $ for $\cH_\La$ is given by
\be \label{eq:standbasis}
& \Phi_\La 
=\set{\phi^\La_A=\pa{\prod_{i\in A}\sigma_i^-}\phi^\La_\emptyset:\ A\subset \Lambda, \ \abs{A} <\infty},
 \ee
where $\phi^\La_\emptyset= \bigotimes_{i\in \La}\uparrow\rangle_i$, and $\abs{A}$ denotes the cardinality of a set $A\subset \Z$.

The total number of particles (in $\La$) operator is given by $\cN^\La= \sum_{i\in \La} \cN_i$. It is diagonalized in the canonical  basis:   $\cN^\La \phi^\La_A= \abs{A} \phi^\La_A$.

  If ${\Lambda}\ne \emptyset$, 
we  define the orthogonal projections  $P_\pm^{\Lambda}$ on  $\cH_{\Lambda}$ by 
 \begin{align}\label{P+-S}
P_+^{\Lambda}=\bigotimes_{i\in {\Lambda}}\pa{{\ident}_{\mathcal H_{i}}-\cN_i}= \chi_{\set{0}}\pa{\cN^{\Lambda}}  \qtx{and} 
P_-^{\Lambda}= {\ident}_{\mathcal H_{\Lambda}}-P_+^{\Lambda}=  \chi_{[1,\infty)}\pa{\cN^{\Lambda}}.
\end{align} 
$P_+^{\Lambda}$  is the orthogonal projection onto states with no particles in the set ${\Lambda}$;  $P_-^{{\Lambda}}$  is the orthogonal projection onto states with at least one particle in ${\Lambda}$. (Note that $P_-^{\set{i}}=\cN_i$ for $i\in  \Z$.)
We also set 
 \be \label{P+-empty}
 P_+^{\emptyset } ={\ident}_{\cH_{\Lambda}} \qtx{and } P_-^{\emptyset }=0.
 \ee
  
We consider $\Z$ as a graph with the usual graph  distance  $d_\Z(i,j)= \abs{i-j}$ for  $i,j\in \Z$.  We will  consider $\La \subset \Z$ as a subgraph of $\Z$,  and
 denote by $\dist_\La( \cdot,\cdot) $ the  graph distance in $\La$, which can  attain the infinite value   if $\La$ is not a connected subset of $\Z$.
 
  Given  $ {A}\subset \Lambda\subset \Z$,   we set
   \be
{[{A}]^\La_s } &=\begin{cases}\set{i\in\La: \dist_\La\pa{i,{A}}\le s} &\sqtx{if}  s\in \N^0=\set{0} \cup \N  \\ 
\set{i\in\La:\dist_\La \pa{i,  \La \setminus {A}}\ge 1-s}= {A}\setminus [\La \setminus {A}]^\La_{-s}&\sqtx{if}  s\in-\N
\end{cases}.
\ee
If ${A}=\set{j}$ we write ${[j]^\La_q }={[\set{j}]^\La_q }$.
For $s\in \N$,  we also set
\be
\partial_{s}^{\La, out} {A}&=\set{i\in\Lambda:\ 1\le \dist_\La\pa{i,{A}}\le s}= [{A}]^\La_s\setminus {A},\\
 \partial_{s}^{\La,in} {A}&=\set{i\in\Lambda:\ 1\le  \dist_\La \pa{i,\La \setminus {A}}\le s}={A}\setminus [{A}]^\La_{-s},\\
 \partial_s^\La {A} &=  \partial_{s}^{\La, in} {A} \cup  \partial_{ex}^{\La, out}{A}=[{A}]^\La_s \setminus  [{A}]^\La_{-s},\\
{\bpartial}^\La {A} &=  \set{\set{i,j}\subset \La: \  (i,j) \in \pa{\partial_{1}^{\La,in} {A} \times \partial_{1}^{\La,out}{A} } \cup \pa{\partial_{1}^{\La,out} {A} \times \partial_{1}^{\La,in}{A} }}.
 \ee
If $s=1$, we usually omit it from the notation altogether.

Given bounded operators $S$ and $Y$ on a Hilbert space, we use the notation $\pa{S}_Y=YSY$. 

When $\La=\Z$ it will be mostly omitted from the  notation.

\subsection{The XXZ Hamiltonian on subsets of \texorpdfstring{$\Z$}{Z}}
 We fix $\Delta >1$ and $\lambda >0$. Since $\sigma^z=\ident -2\cN$, we can rewrite $H_0$  as  
\be\label{H0h}
H_0 =  \sum_{i\in\Z} h_{i,i+1} +  \cN^\Z=  \sum_{\set{i,i+1}\subset \Z} h_{i,i+1} +  \cN^\Z ,
\ee  
where
\be
 h_{i,i+1} &=  -\tfrac{1}{2\Delta}\pa{\sigma_i^+\sigma_{i+1}^-+\sigma_i^-\sigma_{i+1}^+} - \mathcal{N}_{i}\mathcal{N}_{i+1}.
\ee
Note that $h_{i,i+1}$ is an operator on $\cH_i\otimes \cH_{i+1}\cong \C^4$ with 
\be\label{hii+1}
\norm{h_{i,i+1} }=1.
\ee

This representation of $H_0$ motivates the following definition  of the random XXZ Hamiltonian $H^\La$ on a subset  $\La \subset \Z$:
\be
H^\La=H^\La_{\bmo}= H_0^\La + \lambda V_{\bmo}^\La 
\qtx{acting on} \cH_\La,
\ee  
where
\be
H_0^\La =   \sum_{\set{i,i+1}\subset \La} h_{i,i+1} +  \cN^\La , 
\ee
and
\be
V^\La_{\bmo}=V^\La_{{\bmo}_\La} = \sum_{i\in\La} \omega_i \mathcal{N}_i,   \quad {\bmo}_\La=\set{\omega_i}_i\in \La.
\ee
This formulation incorporates a boundary condition if $\La \ne \Z$.  (If $\La=\Z$ we mostly omit the subscript or superscript $\Z$.)

We set  $R_z^\La= (H^\La -z)^{-1}$   for $z \not\in \sigma(H^\La)$, the resolvent of   $H^\La $.

The  free Hamiltonian $H_0^\Lambda$ can be rewritten as
\beq\label{eq:modH}
H_0^\Lambda=-\tfrac{1}{2\Delta} \bD^\Lambda + \cW^\Lambda \,
\eeq
where 
\begin{align}\label{bD}
 \bD^\Lambda = \sum_{\set{i,i+1}\subset \Lambda} \pa{\sigma_i^+\sigma_{i+1}^-+\sigma_i^-\sigma_{i+1}^+} \qtx{and}
 \cW^\Lambda = \cN^\Lambda -\sum_{\set{i,i+1}\subset \Lambda} \cN_i\cN_{i+1} .
 \end{align} 
 
The operators     $\cW^\La$, and $V^\Lambda_\omega$  are diagonalized in the canonical (orthonormal) basis $\Phi_\La  $ for $\H_\La$. More precisely, for all $A\subset \La$ with $\abs{A}<\infty$ we have:
 \begin{itemize}
 
 \item   $V^\Lambda_{\bmo}   \phi_A^\La= \omega\up{A}  \phi_A^\La$,  where $\omega\up{A}=\sum_{i\in A} \omega_i$,  

 \item   $\cW^\La \ \phi_A^\La= W^\La_A \phi_A^\La$,
where  $W^\La_A\in [0,\abs{A}]\cap \N^0$ is the number of clusters of $A$ in $\La$, i.e., the number of connected components of  $A$ in $\Lambda$ (considered as a subgraph of $\Z$).   
\end{itemize}

  An important feature of the XXZ Hamiltonian  $H^\Lambda$  (and $H_0^\La$) is  particle number conservation: $H^\La$ and  $ \cN^\La$ commute, that is,  all bounded functions of $H^\La$ and $ \cN^\La$ commute.

It can be verified (e.g., \cite{EK22})    that 
\beq\label{H0W}
0\le \tfd \cW^\Lambda\le H_{0}^\Lambda \le H^\Lambda,
\eeq
$H^\La \phi^\La_\emptyset= H^\La_0 \phi^\La_\emptyset=\cW^\Lambda \phi^\La_\emptyset=0$, and $0$ is a simple eigenvalue for these operators.  It follows that the spectrum of $H^\Lambda$ is  of the form  
\beq
\sigma(H^\Lambda)=\set{0} \cup \pa{\left[1 -\tfrac 1 \Delta, \infty \right ) \cap  \sigma(H^\Lambda) }.
\eeq

The lower bound in \eq{H0W}  leads to the introduction of the energy thresholds $k\tfd$, $ k=0, 1,2\ldots$. 
For technical reasons we introduce  energy intervals
labeled by $ q \in \tfrac 12 \Z$ as in \cite{EK24}:
 \be\label{Ikle2}
I_{\le q}&= \left(-\infty, (q+\tfrac 3 4)\tfd\right] ,
\quad 
I_q= \left[\sfd, (q+\tfrac 3 4)\tfd\right].
\ee

Given $K\subset \La\subset \Z$, where  $K$ is a finite  interval (possibly $K=\emptyset$) and $\La$ is nonempty (possibly $\La=\Z$), 
 we  set 
 \be\label{HdaggerK}
 H^{\La\upharpoonright  K}&= H^K + H^{\La \setminus K} \qtx{and} \Gamma^{\La{\upharpoonright} K}= H^\La-H^{\La{\upharpoonright} K}\qtx{acting on} \cH_\La,\\
H^{{\upharpoonright} K}&=H^{\Z{\upharpoonright} K}, \quad\Gamma^{{\upharpoonright} K} =\Gamma^{\Z{\upharpoonright} K}.
\ee
(Note that $H^{\La{\upharpoonright} \emptyset}= H^\La $, $H=H^\Z$.)
  Since $K$ is an interval, it follows from \eq{hii+1} that
\be\label{GammaK}
\norm{\Gamma^{\La{\upharpoonright} K}}  \le 2.
\ee

We also set
\be
R^{\La{\upharpoonright} K}_z&= \pa{H^{{\upharpoonright} K} -z}^{-1}\qtx{for} z\in \C \setminus \sigma\pa{H^{{\upharpoonright} K} },\\
P^{\La{\upharpoonright} K}_{I_{\le q}}& = \chi_{I_{\le q}}(H^{\La{\upharpoonright} K})=P^{\La{\upharpoonright} K}_{(q+\frac 3 4)\nfd }, \quad P^{\La{\upharpoonright} K}_{I_{q}}= \chi_{I_{ q}}(H^{\La{\upharpoonright} K}),\\
R^{{\upharpoonright} K}_z  &  =R^{\Z{\upharpoonright} K}_z, \quad P^{{\upharpoonright} K}_{I_{\le q}} =P^{\Z{\upharpoonright} K}_{I_{\le q}}, \quad P^{{\upharpoonright} K}_{I_{q}} =P^{\Z{\upharpoonright} K}_{I_{q}}.
\ee

The evolution of an observable $T$ on $\La$ by the Hamiltonian $H^{\La{\upharpoonright} K}$ is given by
\be
\tau_t^{\La{\upharpoonright} K}(T)= \e^{itH^{\La{\upharpoonright} K}} T  \e^{-itH^{\La{\upharpoonright} K}} \qtx{for} t \in \R.
\ee

We  set $ \tau_t^{{\upharpoonright} K}=\tau_t^{\Z{\upharpoonright} K}$.  If $K=\emptyset$ we omit ${\upharpoonright} K$ from the notation.

We observe that
\be\label{suppsK}
 \tau_t^{\La{\upharpoonright} K}(T)= \tau_t^{K}(S) \qtx{if} \supp S \subset K,
  \ee
and
\be\label{PLaKK}
P^{\La{\upharpoonright} K}_{I\le q}= P^{\La{\upharpoonright} K}_{I\le q}P^{K}_{I\le q} = P^{K}_{I\le q} P^{\La{\upharpoonright} K}_{I\le q}\qtx{for} q\in \tfrac 12 \N^0.
\ee

\subsection{Finite volume theorem and detailed main result}

Given $ q\in  \tfrac 12\N^0$, we   define $\alpha_q=13+\beta_{q+\frac 12} \in \N^0$, where  $\beta_q $ is defined recursively  for $ q\in  \tfrac 12\N^0$
  by   $\beta_0=0$, $ \beta_q=\beta_{q-\frac 12} +9\cl{q }+ 13$ for $q\in \frac 12 \N$.
Note that  $ 9 q^2 + \frac {61}2 q  \le \beta_q\le 9 q^2 +  \frac {79}2 q $.  
 
We now state a version of   \cite[Theorem 2.6]{EK24}.

\begin{theorem}[ Slow propagation of information-finite volume]\label{thm:localmodell}
 Fix parameters $\Delta_0>1$ and  $ \lambda_0 >0$.   Then for all   $q\in \frac 12 \N^0$  there exist strictly positive constants $ C_q, Y_q,   \gamma_q, \rho_q,\theta_q$ (depending on $q$, $\Delta_0$, $ \lambda_0$) such that for all $\Delta \ge \Delta_0$ and $\lambda \ge \lambda_0$ with  $\lambda \Delta^2\ge Y_q$ the following assertion holds: 
Given an observable $T$  supported on a finite  interval $\mathcal X\subset \Z$ with $\norm{T}\le 1$,  a scale $\ell\in \N$,  and a finite interval   $\La \subset \Z$  such that  $ [\cX]_{\alpha_q\ell}\subset \La$,   for all $t\in \R$
there exists a random observable ${T}_{q,t,\ell}={T}(\bmo,q, t,\ell)$, supported on  $ [\cX]_{\alpha_q\ell}$ and independent of $\La$,  such that
\be\label{T_t}
\norm{ {T}_{q,t,\ell}} \le C_q \langle t\rangle^{\gamma_q},
\ee
and
\be\label{eq:locality2mgq}
 \E\norm{P^\Z_{I_{\le q}}\pa{\tau^\La_t(T)-  {T}_{q,t,\ell}} P^\Z_{I_{\le q}}} \le   C_{q} \langle t\rangle^{\gamma_q}\abs{\Lambda}^{{\rho_{q} }}\e^{- \theta_{q}\, \ell}.
\ee
\end{theorem}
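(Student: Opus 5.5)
The plan is to reduce the statement to \cite[Theorem 2.6]{EK24}. The construction of ${T}_{q,t,\ell}$ is taken from there without change. In \cite{EK24} the approximating observable is built only from the restriction of $H$ to the interval $[\cX]_{\alpha_q\ell}$, through local energy projections and time evolutions on that interval. It is therefore independent of $\La$, it is supported on $[\cX]_{\alpha_q\ell}$, and it obeys the norm bound \eq{T_t}. The $\alpha_q$ comes from the recursive construction, which applies $q$-dependent enlargements of the support at each half-integer energy level $q'\le q+\frac12$. What remains is \eq{eq:locality2mgq}. The difficulty is that the outer projections there are the infinite-volume projections $P^\Z_{I_{\le q}}$, while the finite-volume statement in \cite{EK24} is formulated with projections of a finite-volume Hamiltonian.

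\textbf{Step 1.} Set $A=\tau^\La_t(T)-{T}_{q,t,\ell}$, an operator on $\cH_\La$ embedded in $\cH_\Z$. Decompose $\cH_\Z=\cH_\La\otimes\cH_{\Z\setminus\La}$ and write $H=H^{\Z{\upharpoonright}\La}+\Gamma^{\Z{\upharpoonright}\La}$, where $\norm{\Gamma^{\Z{\upharpoonright}\La}}\le 2$ by \eq{GammaK}.

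\textbf{Step 2.} Show that $\Gamma^{\Z{\upharpoonright}\La}$ cannot carry the state far in energy. Since $H^{\Z\setminus\La}\ge 0$, the bound $H^{\Z{\upharpoonright}\La}\le H+2$ gives
\be
P^\Z_{I_{\le q}}=P^\Z_{I_{\le q}}\,\chi_{(-\infty,(q+\frac34)(1-\frac1\Delta)+2]}(H^\La\otimes\ident)+\text{error}.
\ee
The error would be controlled by a Combes--Thomas/Helffer--Sjöstrand argument in energy. Choosing $q'$ with $q'(1-\frac1\Delta)\ge (q+\frac34)(1-\frac1\Delta)+3$, that is $q'\approx q+C\Delta/(\Delta-1)$ and depending only on $q$ and $\Delta_0$, one gets
\be
\norm{P^\Z_{I_{\le q}} A P^\Z_{I_{\le q}}}\le \norm{(P^\La_{I_{\le q'}}\otimes\ident)\,A\,(P^\La_{I_{\le q'}}\otimes\ident)}+\text{(boundary terms)}.
\ee
Because $A$ acts trivially on $\cH_{\Z\setminus\La}$, the first term equals $\norm{P^\La_{I_{\le q'}}AP^\La_{I_{\le q'}}}$. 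This is exactly the quantity bounded in \cite[Theorem 2.6]{EK24} at level $q'$, giving $C\langle t\rangle^{\gamma}\abs{\La}^{\rho}\e^{-\theta\ell}$. The constants $\alpha_q,Y_q,\ldots$ are then relabeled accordingly.

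\textbf{Main obstacle.} The hard part is making Step 2 rigorous without losing the $\e^{-\theta\ell}$ decay. An energy projection cannot in general be dominated in norm by a shifted projection of $H^{\Z{\upharpoonright}\La}$. The approach I would try first is to avoid the comparison of spectral projections altogether. Instead I would run the \cite{EK24} argument directly with $P^\Z_{I_{\le q}}$ in place of the finite-volume projection. That argument uses the projection only through two ingredients:
\begin{enumerate}
\item the a priori bound \eq{H0W}, which holds verbatim for $H=H^\Z$;
\item the identity \eq{PLaKK} together with resolvent identities involving $\Gamma^{\Z{\upharpoonright}K}$ for intervals $K\subset[\cX]_{\alpha_q\ell}$, all of which hold with $\La=\Z$.
\end{enumerate}
The $\abs{\La}^{\rho_q}$ factor enters only through a union bound over the possible locations of low-energy clusters in $\La$, and this is unaffected. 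Checking each step of \cite{EK24} for this substitution is the bulk of the work. I expect the only genuinely new point to be the Wegner-type estimates at the boundary of $\La$, which must now be taken with $H^{\Z{\upharpoonright}\La}$ instead of $H^\La$.
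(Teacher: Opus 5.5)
Your first paragraph is all the paper does. The Remark after the theorem says the result follows from \cite[Theorem 2.6]{EK24} and its proof, with the support in $[\cX]_{\alpha_q\ell}$, the independence of $\La$, and the bound \eqref{T_t} read off from the construction of $T_{q,t,\ell}$. The obstacle you then try to overcome is not addressed in the paper because it is not meant to be there: the projections in \eqref{eq:locality2mgq} are the finite-volume ones $P^\La_{I_{\le q}}=\chi_{I_{\le q}}(H^\La)$, and the $P^\Z$ in the display is a slip. You can see this in how the theorem is used in \eqref{eq:locres} and \eqref{P+taut-T222}, where $\La=[\cX]_{9j\xi}$ and the projection is $P^{[\cX]_{9j\xi}}_{I_{\le p}}$. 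It is also what the sketch of the proof of Theorem \ref{thm:main} describes. With that reading, \eqref{eq:locality2mgq} is \cite[Theorem 2.6]{EK24} verbatim, and your Steps 1--2 are unnecessary.

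If you insist on the literal $P^\Z_{I_{\le q}}$, your proposal has a genuine gap that cannot be closed along the lines you suggest.
\begin{itemize}
\item In Step 2 the error is a multiple of $\norm{A}\,\norm{(\ident-P^\La_{I_{\le q'}}\otimes\ident)P^\Z_{I_{\le q}}}$, which does not involve $\ell$ at all. From $H^\La\otimes\ident\le H+2$ you only get the Markov bound $\norm{\chi_{(E',\infty)}(H^\La\otimes\ident)P^\Z_{I_{\le q}}}^2\le \bigl((q+\frac34)(1-\frac1\Delta)+2\bigr)/E'$, and sharper energy-tail estimates still decay only in $q'-q$. Making this $O(\e^{-\theta\ell})$ forces $q'$ to grow with $\ell$. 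Then $Y_{q'}$, $\theta_{q'}$ and $\alpha_{q'}$ (which grows like $9q'^2$) stop being uniform, and the theorem does not follow.
\item Your fallback, rerunning \cite{EK24} with $P^\Z$ in place of the finite-volume projection, is asserted rather than checked, and it cannot be a routine substitution. Combine the literal statement with the Lieb--Robinson estimate \eqref{LR}, i.e.\ $\norm{\tau_t(T)-\tau^{[\cX]_R}_t(T)}\le C\e^{-m(R-v\abs{t})}$, and take $\La=[\cX]_R$ with $R=\max\{\alpha_q\ell,\,\ell+\lceil v\abs{t}\rceil\}$. Then $\abs{\La}^{\rho_q}$ is polynomial in $\ell$ and $\langle t\rangle$, and you obtain Theorem \ref{thm:main2} in a few lines.
\end{itemize}
So the literal version is essentially the main result of this paper. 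Passing from $P^\La$ to $P^\Z$ is exactly what the whole proof of the main result is for: Lemmas \ref{modLemma}, \ref{lemfPU}, \ref{lemB1} and the Duhamel comparison. It cannot be waved through as a re-reading of \cite{EK24}, and the boundary Wegner estimates are not the only new point.
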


\begin{remark}
Theorem \ref{thm:localmodell} as stated follows from  \cite[Theorem 2.6]{EK24} and its proof. 
Although \cite[Theorem 2.6]{EK24} does not explicitly state the independence of  ${T}_{q,t,\ell}$  from   $\La  \supset[\cX]_{\alpha_q\ell}$  and the bound \eqref{T_t}, 
these properties can be extracted from  the construction of  ${T}_{q,t,\ell}$   in the proof of   \cite[Theorem 2.6]{EK24}. 
\end{remark}

We will prove the following theorem, which implies Theorem \ref{thm:main}.   Given $q\in \frac 12 \N^0$,   $ Y_q,   \gamma_q, \rho_q,\theta_q$ will denote the constants
from Theorem \ref{thm:localmodell}.    ($C_q$ will always be a generic notation for a constant depending on $q$.) In addition, Given $q\in \frac 12 \N^0$, a scale $\ell\in \N$,  $t \in \R$, and an observable $T$  supported on a finite  interval $\mathcal X\subset \Z$ with $\norm{T}\le 1$, by  ${{T}_{q,t,\ell}}={T}(\bmo, t,q,\ell)$ we denote the random observable constructed in Theorem \ref{thm:localmodell}, supported on $ [\cX]_{\alpha_q\ell}$, and  satisfying 
\eq{T_t} and \eq{eq:locality2mgq} for $\La  \supset[\cX]_{\alpha_q\ell}$.

\begin{theorem}[ Slow propagation of information-infinite volume,  detailed version]\label{thm:main2}
 Fix parameters $\Delta_0>1$ and  $ \lambda_0 >0$.   Then for all   $q\in \frac 12 \N^0$  there exist strictly positive constants  $\tilde \gamma_q$ and $\tilde\theta_q$ (depending on $q$, $\Delta_0$, $ \lambda_0$) such that for all $\Delta \ge \Delta_0$ and $\lambda \ge \lambda_0$ with  $\lambda \Delta^2\ge Y_{q+\frac 12}$ the following assertion holds: 
Given an observable $T$  supported on a finite  interval $\mathcal X\subset \Z$ with $\norm{T}\le 1$ and a scale $\ell\in \N$, the following holds  for all $t\in \R$:
\be\label{eq:main2}
    \E \norm{P^\Z_{I_{\le q}}\pa{\tau^\Z_t(T)- {T}_{q+\frac 1 2,t,\ell}}P^\Z_{I_{\le q}}}\le C_{q,\abs{\cX}} \langle t\rangle^{\tilde \gamma_{q}}e^{-\tilde \theta_{q}\, \ell}.
\ee
 \end{theorem}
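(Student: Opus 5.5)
The plan is to reduce the infinite-volume statement to Theorem \ref{thm:localmodell}, applied with $q+\frac12$ in place of $q$, on a finite interval $\La=[\cX]_{L}$ whose size is proportional to $\langle t\rangle$ plus $\alpha_{q+\frac12}\ell$. We write
\begin{equation*}
P^\Z_{I_{\le q}}\pa{\tau_t(T)-T_{q+\frac12,t,\ell}}P^\Z_{I_{\le q}}
= P^\Z_{I_{\le q}}\pa{\tau_t(T)-\tau^\La_t(T)}P^\Z_{I_{\le q}} + P^\Z_{I_{\le q}}\pa{\tau^\La_t(T)-T_{q+\frac12,t,\ell}}P^\Z_{I_{\le q}} .
\end{equation*}
Since $I_{\le q}\subset I_{\le q+\frac12}$, we have $P^\Z_{I_{\le q}}=P^\Z_{I_{\le q}}P^\Z_{I_{\le q+\frac12}}$. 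The second term is therefore bounded in expectation by \eq{eq:locality2mgq} with $\Lambda$ of size $|\cX|+2L$, giving $C_q\langle t\rangle^{\gamma_{q+\frac12}}(|\cX|+2L)^{\rho_{q+\frac12}}\e^{-\theta_{q+\frac12}\ell}$. If $L\sim \alpha_{q+\frac12}\ell + c\langle t\rangle +c\ell$, the polynomial factor in $\ell$ is absorbed by slightly reducing $\theta$, and the factor in $t$ is absorbed into $\langle t\rangle^{\tilde\gamma_q}$. The dependence on $|\cX|$ goes into $C_{q,|\cX|}$.

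The first term is the Lieb–Robinson comparison. The evolutions $\tau_t$ and $\tau^{\La}_t$ differ only through $\Gamma^{\upharpoonright\La}=H-H^{\upharpoonright\La}$, a boundary term of norm at most $2$ supported near $\partial\La$. We have $\tau^{\upharpoonright \La}_t(T)=\tau^\La_t(T)$ by \eq{suppsK}. Duhamel's formula gives
\begin{equation*}
\tau_t(T)-\tau^{\upharpoonright\La}_t(T)=i\int_0^t \e^{isH}\,\bigl[\Gamma^{\upharpoonright\La},\tau^{\upharpoonright\La}_{t-s}(T)\bigr]\,\e^{-isH}\,ds .
\end{equation*}
The Lieb–Robinson bound (uniform in $\bmo\in[0,1]^\Z$, since the random field is on-site) bounds the commutator by $C|\cX|\e^{-m(L-v|t-s|)}$. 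Choosing $L=v\langle t\rangle+\ell$ makes this term at most $C|t|\,|\cX|\e^{-m\ell}$ deterministically. With the projections on both sides this remains bounded, since $\norm{P}\le1$. Combining the two terms gives \eq{eq:main2} with $\tilde\theta_q=\min(m,\theta_{q+\frac12})/2$ and a suitable $\tilde\gamma_q$.

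The step to verify carefully is that the Lieb–Robinson/Duhamel argument really applies to the unbounded infinite-volume $H$ together with the decoupled $H^{\upharpoonright\La}$, whose $\La^c$ part is infinite. I would handle this by working on $\cH_{\Z,0}$ and noting that $H^{\upharpoonright\La}=H^\La+H^{\Z\setminus\La}$, where the second summand commutes with $T$ and with all $\tau^\La_s(T)$. Then $\tau^{\upharpoonright\La}_{s}(T)$ is strictly supported in $\La$, and the commutator with $\Gamma^{\upharpoonright \La}$ involves only the two boundary bonds. The bound on this commutator is the standard finite-volume Lieb–Robinson estimate inside $\La$, quoted in the introduction via \cite{Young}.

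If this cleaner route works, the elaborate approximation of $f(H)Tf(H)$ described in the sketch is not even needed, since the projections are placed outside. The main risk is the subtlety emphasized in the introduction: the finite-volume theorem controls $P^\Z$-sandwiched quantities, and this must be consistent with $\tau^\La_t(T)$ acting on $\cH_\Z$, which I would check against \eq{PLaKK}.
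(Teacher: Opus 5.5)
Your comparison of $\tau_t(T)$ with $\tau^\La_t(T)$ via Duhamel and a Lieb--Robinson bound uniform in $\bmo$ is fine. The problem is the second half of your argument, which reads \eq{eq:locality2mgq} literally, with the infinite-volume projection $P^\Z_{I_{\le q}}$. That is not the input actually available. The result of \cite{EK24} is a statement about the finite system: it controls $\E\norm{P^\La_{I_{\le p}}\pa{\tau^\La_t(T)-T_{p,t,\ell}}P^\La_{I_{\le p}}}$ with $P^\La_{I_{\le p}}=\chi_{I_{\le p}}(H^\La)$. This is exactly how the paper uses it in \eq{eq:locres}, so the superscript $\Z$ in Theorem \ref{thm:localmodell} should be read as $\La$. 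You saw the symptom yourself: under your reading, Theorem \ref{thm:main2} would be a three-line corollary and the rest of the paper would be pointless.

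With the correct input, the term $P^\Z_{I_{\le q}}\pa{\tau^\La_t(T)-T_t}P^\Z_{I_{\le q}}$ in your decomposition is not controlled, and $P^\La_{I_{\le p}}$ cannot simply be inserted next to $P^\Z_{I_{\le q}}$. Lieb--Robinson bounds control conjugations $\e^{isH}S\e^{-isH}$. They do not control the single propagators $\e^{iuH}$ from which $\chi_{I_{\le q}}(H)$ is built, and $\e^{iuH}-\e^{iuH^{\upharpoonright\La}}$ is not small in norm. The decoupling is not spectrally small either: a cluster straddling an endpoint of $\La$ is split when $\Gamma^{\upharpoonright\La}$ is removed. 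This raises the cluster count, hence the diagonal energy, by $1$, which exceeds the spacing $\frac12\tfd$ between $I_{\le q}$ and $I_{\le q+\frac12}$. Finally, \eq{PLaKK} relates $P^{\upharpoonright\La}$ to $P^\La$, not $P^\Z$ to either.

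Bridging this gap is the actual content of the paper's proof, and it needs three ideas that are absent from your proposal.

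\textbf{Smoothing.} $P^\Z_{I_{\le q}}$ is replaced by $f_\xi(H)$ and then by $\tilde f_\xi(H)=\frac1{\sqrt{2\pi}}\int_{\abs{u}\le\frac\Delta8\xi}\hat f_\xi(u)\e^{-iuH}\,\d u$ (Lemmas \ref{lemfxi} and \ref{lemfPU}). The price is the enlarged window $I_{\le q+\frac12}$; this is why $T_{q+\frac12,t,\ell}$ appears, and it is not a cosmetic choice.

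\textbf{Particle-free shells.} The identity is decomposed over the $\cl{p}+5$ shells $\partial_{3\xi}[\cX]_{9j\xi}$, with $p=q+\frac12$. Lemma \ref{modLemma} is a large-deviation estimate, i.e.\ a probabilistic input beyond Theorem \ref{thm:localmodell}. It shows that within the energy window, the contribution in which every shell contains a particle is $O(\xi^{C_q}\e^{-c\xi})$ in expectation. This is harmless since $\ell\lesssim\xi\lesssim\ell+\abs{t}$.

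\textbf{Finite speed on an empty shell.} On a particle-free shell one has $\Gamma^{\upharpoonright K}=P_-^{\partial K}\Gamma^{\upharpoonright K}P_-^{\partial K}$ with $K=[\cX]_{9j\xi}$. Duhamel's formula plus Lemma \ref{lemB1} then let one replace \emph{both} $\tau_t$ and $\tilde f_\xi(H)$ by their $H^{\upharpoonright K}$ counterparts. Lemma \ref{lemB1} says particles cannot cross an empty region of width $\xi$ in times up to $\frac\Delta4\xi$, which covers every time entering through the compact Fourier support of $\tilde f_\xi$. Only after this does \eq{PLaKK} produce $P^K_{I_{\le p}}$, so that the finite-volume theorem applies.

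Your Lieb--Robinson step covers only the comparison of the dynamics, which is the easy part.
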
   
 Throughout the paper, we will use generic  nonnegative constants $C, c,m$, etc., whose values will be allowed to change from line to line, or even in the same line  in a displayed equation. These constants will not depend on subsets of $\Z$, but    they will,  in general,  depend on  $\mu$, $\Delta_0$, and $ \lambda_0$.  (We will  always assume $ \Delta \ge \Delta_0>1$ and  $\lambda\ge \lambda_0>0$. In the proofs we may get  constants depending on $\Delta$ and $\lambda$,  but with minor modifications we can get constants depending uniformly on $ \Delta \ge \Delta_0$ and  $\lambda\ge \lambda_0$.) This dependence will not be made explicit, except occasionally.  We will indicate the dependence of a constant on  $k\in \N^0$ or $q\in \frac 12 \N^0$ explicitly by writing it as $C_k, m_k, C_q, c_q$, etc. These constants can always be estimated from the arguments, but we will not track the changes to avoid complicating the arguments.

\section{ Guiding principles  for the proof of the main result}

The proof of Theorem \ref{thm:main2}  is built upon three core guiding principles, which are formulated and proven in this section. Loosely speaking, they are:

{\bf Restriction on the location of particles}: The Fermi projection $P^\Z_{I_{\le k}}$, $k\in \N$, restricts the possible 
configurations of particles (spin-down states) within any interval $\La \subset \Z$:  the number of groups of particles located  inside $\La$ that are separated by scales comparable with or exceeding $\ln \abs{\La}$ is of order $k$,  outside an event of small probability.
The precise statement is given in  Lemma  \ref{modLemma}. The proof relies on Proposition  \ref{lemEkR}, which follows from  the quasi-locality property for the resolvent and large deviation estimates.

{\bf Approximation of an  energy interval  indicator}: The indicator function for a fixed energy interval can be exponentially well approximated by functions whose Fourier  transforms have compact support. This is shown in Lemma \ref{lemfxi}.

{\bf Finite speed of propagation}: The propagator $\e^{itH}$ cannot flip a contiguous block of $L$ up-spins in a time $t$ shorter than $cL$ for some $c>0$. The precise statement,  Lemma \ref{lemB1}, is a consequence of the finite speed of propagation inherent in local spin chains and the specific structure of the XXZ Hamiltonian.

\subsection{Restriction on the location of particles}

 For a given  $m \in \N^0$, we set $Q_m^\Lambda=\chi_{\set{m}}\pa{\cW^\Lambda}$, the orthogonal projection in $\H_\La$ onto  configurations  with exactly $m$ clusters, and let  $Q_B^\Lambda=\chi_{B}\pa{\cW^\Lambda}=\sum_{m\in B} Q_m^\Lambda$  for $ B\subset \N^0$. Note that  $Q_0^\Lambda=  P_+^\Lambda$.
 For $k\in \N$, we set 
\be\label{QkhatQ}
Q_{\le k}^\Lambda   =Q_{\set{1,2,\ldots,k}}^\Lambda =\sum_{ m=1}^k Q_m^\Lambda \qtx{and} 
\what Q_{\le k}^\Lambda   =Q_{\le k}^\Lambda + \tfrac {k+1} k Q_0^\Lambda.
\ee

We also set  
\be \label{eq:compH'}
\what  H_0^{ \Lambda}&=H^{\Lambda}+\tfd Q_0^\Lambda,\\
\what  H_k^{ \Lambda}&=H^{\Lambda}+{k}\tfd \what  Q_{\le k}^{\Lambda} \qtx{for}  k\in \N,\\
\what  R^{\Lambda}_{k,z}&= \pa{\what  H_k^\La  -z}^{-1}  \mqtx{for} z \notin \sigma(\what  H_k^\La), \ k\in \N^0.
\ee 
It follows  that   for $k\in \N^0$ we have (see \cite[Eq. (3.21)]{EK22})
 \be \label{eq:hatH1'}
\what  H_k^{ \Lambda}\ge  \pa{k+1} \tfd    \qtx{and}  \pa{\what  H_k^{ \Lambda}-E}  \ge \tfrac 1{4}\tfd  \mqtx{for} E \in I_{\le k}.
\ee

We fix $\Delta_0 \ge 5$ (the case $1<\Delta <5$ can be handled by a modification of the argument as discussed in \cite[Remark 3.3]{EK22}) 
and  $\lambda_0 >0$, and assume $\Delta \ge \Delta_0$ and $ \lambda \ge \lambda_0$.

\begin{proposition}\label{lemEkR}
Given a finite interval $\La \subset \Z$, $R\in \N$, and $k\in \N$, let 
 $Y\up{R}_-=\chi_{[0,R]}(\mathcal N^\La)$ and  $Y\up{R}_+=\ident - Y\up{R}_-$, and   consider the  random operator on $\cH_\Z$ given by
\be
\tilde H^Z_{k,R}=H^\Z+(k+4)\tfd Y\up{R}_-\what  Q_{\le k+4}^{Z}.   
\ee
Then for all intervals $A\subset \Z$ and  $r\in \N$ we have 
\be\label{tildeRkR}
\norm{P_-^{A}\tilde R_{k,R,z}^\Z P_+^{[A]^\Z_{r}}}\le \tfrac 1 {\eta_z}\e^{-\pa{\ln \Delta \eta_z}  r} \sqtx{for} z\in \C \qtx{with} \eta_z=\dist \pa{z, \sigma \pa{\tilde H^\Z_{k,R}}}>0,
\ee
where $\tilde R_{k,R,z}^\Z=(\tilde H^\Z_{k,R}  -z)^{-1}$.

Moreover,  there exists an event  $\cE_{k,R}$,   satisfying
\be\label{PEkR}
\P\set{\cE_{k,R}} \ge 1- C_{k} C_\mu \abs{\La}^{2k+15} \e^{-c_\mu R},
\ee
 such that  for $\bmo \in \cE_{k,R}$ we have
\be\label{eq:tildeH}
\tilde H^\Z_{k,R}\ge (k+1) \tfd .
\ee
\end{proposition}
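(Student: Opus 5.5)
The proposition has two parts: a deterministic Combes--Thomas type bound \eqref{tildeRkR}, and a probabilistic lower bound \eqref{eq:tildeH}. Note that the superscript $Z$ in $\what Q^{Z}_{\le k+4}$ should be read as $\La$, so the added term acts only on $\cH_\La$. That term commutes with $\cN^\La$ and is diagonal in the canonical basis.

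\textbf{Part 1: the bound \eqref{tildeRkR}.} I would follow the quasi-locality argument for $H^\Z$ used in \cite{EK22,EK25}.
\begin{enumerate}
\item Write $\tilde H^\Z_{k,R}=-\tfrac1{2\Delta}\bD^\Z+D$, where $D=\cW^\Z+\lambda V+(k+4)\tfd Y\up{R}_-\what Q^\La_{\le k+4}$ is diagonal in $\Phi_\Z$. Only the hopping term $-\tfrac1{2\Delta}\bD^\Z$, of norm $\le\tfrac1\Delta$ per bond, moves particles, and it moves them one site at a time.
\item Consider $P_-^{A}\tilde R_z P_+^{[A]_r}$. A state in the range of $P_+^{[A]_r}$ has no particles within distance $r$ of $A$. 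To produce a particle in $A$, the hopping term must act at least $r$ times.
\item Use the second resolvent identity iteratively. Set $D_r$ to be $D$ plus the hopping terms on bonds that leave $[A]_r$ empty. Then expand $\tilde R_z$ in powers of $-\tfrac1{2\Delta}\bD_{\text{rest}}$.
\item Each step costs a factor $\norm{\tfrac1{2\Delta}\bD_{\text{bond}}\tilde R_z}$. The geometric structure, namely that a particle must travel through $r$ consecutive sites, gives the factor $(\Delta\eta_z)^{-r}$ after summing over paths in one dimension. Here I use that each intermediate resolvent has norm $\le\eta_z^{-1}$.
\item Alternatively, I would run a Combes--Thomas argument with the weight $\e^{\mu\,\dist(\text{particles},A)}$ with $\mu=\ln(\Delta\eta_z)$. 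Conjugation changes the hopping part by a relative bound, and the diagonal part is unaffected. This gives exactly the stated rate, with the outer factor $1/\eta_z$.
\end{enumerate}

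\textbf{Part 2: constructing the event $\cE_{k,R}$.} The goal is $\tilde H\ge(k+1)\tfd$. I would decompose by the number of particles in $\La$ and by the particle configuration outside it.
\begin{enumerate}
\item \emph{Sector $\cN^\La\le R$.} Here the extra term is present. Argue as in \eqref{eq:hatH1'}: configurations with $\le k+4$ clusters in $\La$ receive energy $\ge(k+4)\tfd$ from the added term. Configurations with $\ge k+5$ clusters already have $\cW\ge k+5$ clusters, and \eqref{H0W} gives $\ge(k+5)\tfd$. Boundary clusters straddling $\partial\La$ and interactions with outside particles cost at most $\norm{\Gamma^{\upharpoonright \La}}\le2$ by \eqref{GammaK}. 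The margin of $3$ levels absorbs this, because $(k+4)\tfd-2\ge(k+1)\tfd$ when $\Delta\ge5$. The added term and $H$ do not commute, but that causes no difficulty here: the added term is nonnegative, so I would bound $H^{\upharpoonright\La}$ plus the added term in the $\Phi$-diagonal sense via \eqref{H0W} applied to $H^\La$ and $H^{\Z\setminus\La}$ separately, and then perturb by $\Gamma$.
\item \emph{Sector $\cN^\La>R$.} The energy is at least $\lambda\,\omega\up{A\cap\La}$ plus the free part. Large deviations should then give, outside an event of small probability, that every set of $\ge R$ sites in $\La$ with $\le k+4$ clusters has $\sum\omega_i\ge cR$. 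The kinetic term $\bD$ can change which sites are occupied, but it preserves $\cN^\La$ apart from boundary hopping, which is handled by $\Gamma$.
\end{enumerate}

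\textbf{The main obstacle} is making the second sector rigorous. Many-cluster configurations already have energy $\ge(k+5)\tfd$ via $\cW$. A configuration with $\le k+4$ clusters is determined by at most $2(k+4)$ endpoints, giving $\le C_k\abs{\La}^{2k+8}$ choices. A union bound over these, combined with a large deviation estimate using $0\in\supp\mu$, should produce \eqref{PEkR}. I expect the extra powers of $\abs{\La}$ in the exponent $2k+15$ to come from handling the boundary interaction and the off-diagonal hopping, which I would do by a Schur complement or Feshbach argument between the few-cluster and many-cluster subspaces. That Feshbach step, which must control the off-diagonal coupling uniformly, is where I expect the main difficulty.
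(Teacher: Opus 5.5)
Part 1 is fine in substance; the paper simply cites \cite[Lemma 3.1 and Remark 3.2]{EK22}. The cleanest way to get exactly the rate $\ln(\Delta\eta_z)$ is a commutator iteration. Every diagonal term of $\tilde H^\Z_{k,R}$ commutes with $P_+^{[A]_j}$, so
\[
P_-^A\tilde R^\Z_{k,R,z}P_+^{[A]_j}=P_-^A\tilde R^\Z_{k,R,z}P_+^{[A]_{j-1}}\,[P_+^{[A]_j},\tilde H^\Z_{k,R}]\,\tilde R^\Z_{k,R,z},\qquad \norm{[P_+^{[A]_j},\tilde H^\Z_{k,R}]}\le\tfrac1\Delta,
\]
and iterating $r$ times gives $\eta_z^{-1}(\Delta\eta_z)^{-r}$. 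A Combes--Thomas conjugation would typically give a rate like $\ln(1+c\Delta\eta_z)$, not exactly $\ln(\Delta\eta_z)$. Your expansion around $D_r$ involves resolvents of an operator other than $\tilde H^\Z_{k,R}$, whose norms are not controlled by $\eta_z$.

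The genuine gap is in Part 2. You never close the sector $\cN^\La>R$, nor the coupling between the two $\cN^\La$-sectors. You defer both to a Feshbach/Schur-complement step between few-cluster and many-cluster configurations, call it the main difficulty, and do not carry it out. That step is unnecessary, and looking for it misplaces the difficulty. The hopping between cluster sectors is already controlled by the operator inequality \eqref{H0W}, and the only term not commuting with $\cN^\La$ is $\Gamma^{\upharpoonright\La}$. The powers of $\abs{\La}$ in \eqref{PEkR} come solely from the union bound over configurations, not from $\Gamma$ or the hopping. Also, the large-deviation input is $\mu\neq\delta_0$ (guaranteed by absolute continuity); $0\in\supp\mu$ plays no role here.

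Here is how to close it. By \eqref{H0W}, $H^\La\ge\tfd\cW^\La+\lambda V^\La$ and $H^{\Z\setminus\La}\ge\tfd\cW^{\Z\setminus\La}$. With $\norm{\Gamma^{\upharpoonright\La}}\le 2$ from \eqref{GammaK}, this gives
\[
\tilde H^\Z_{k,R}\ge D-2,\qquad D=\tfd\pa{\cW^\La+\cW^{\Z\setminus\La}}+\lambda V^\La+(k+4)\tfd\, Y\up{R}_-\what Q_{\le k+4},
\]
where $D$ is diagonal in $\Phi_\Z$. Now check $D\ge(k+5)\tfd$ on each basis vector $\phi_B$.

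On $\phi_B$ with $\abs{B\cap\La}\le R$, this follows from the cluster count behind \eqref{eq:hatH1'}. It holds whether the superscript on $\what Q$ is $\La$ (your reading) or $\Z$ (the paper's; its proof writes $Y\up{R}_-\tilde H^\Z_{k,R}Y\up{R}_-=Y\up{R}_-\what H^\Z_{k+4}Y\up{R}_-$), because $\cW^\La+\cW^{\Z\setminus\La}\ge\cW^\Z$.

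On $\phi_B$ with $\abs{B\cap\La}>R$, either $B\cap\La$ has at least $k+5$ clusters, or it has at most $k+4$. In the latter case $\lambda\omega\up{B\cap\La}\ge k+5$ on the event $\cE_{k,R}$ that $\lambda_0\sum_{i\in S}\omega_i\ge k+5$ for every $S\subset\La$ with $\abs{S}>R$ and at most $k+4$ clusters. There are at most $C_k\abs{\La}^{2k+8}$ such $S$. Moreover $\P\set{\sum_{i\in S}\omega_i<a}\le\e^{a}\pa{\E\,\e^{-\omega_0}}^{\abs{S}}$ with $\E\,\e^{-\omega_0}<1$, which gives \eqref{PEkR}. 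On $\cE_{k,R}$ we get $\tilde H^\Z_{k,R}\ge(k+5)\tfd-2\ge(k+1)\tfd$, since $\Delta\ge2$.

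The paper organizes the same idea as a $Y\up{R}_\pm$ block decomposition:
\begin{itemize}
\item the $Y\up{R}_-$ block is handled by $\what H^\Z_{k+4}\ge(k+5)\tfd$;
\item the $Y\up{R}_+$ block uses a large-deviation bound $H^\La\chi_{\set{N}}(\cN^\La)\ge(p+1)\tfd\,\chi_{\set{N}}(\cN^\La)$, quoted from \cite{EK22}, for each $N>R$;
\item the off-diagonal blocks $Y\up{R}_\mp\Gamma Y\up{R}_\pm$ are bounded by $\norm{\Gamma}\le 2$.
\end{itemize}
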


\begin{proof}

The estimate \eq{tildeRkR} follows immediately from \cite[Lemma 3.1 and Remark 3.2]{EK22}.

 Since $[Y\up{R}_-,\what  Q_{\le k+4}^{Z}]=0$, 
 it follows from \eq{eq:hatH1'} that
\be 
Y\up{R}_-\tilde H^\Z_{k,R} Y\up{R}_-  & =Y\up{R}_-\pa{H^\Z+(k+4)\tfd\what  Q_{\le k+4}^{Z}}Y\up{R}_-
\\ & =Y\up{R}_- \what H^\Z_{k+4} Y\up{R}_-  \ge (k+5)\tfd Y\up{R}_-.
\ee

 We also have (here $\La^c=\Z \setminus \La$, $\Gamma^{\La}=\Gamma^{\Z,\La}$)
\be 
Y\up{R}_+\tilde H^\Z_{k,R}  Y\up{R}_+   & =Y\up{R}_+H^\Z Y\up{R}_+=Y\up{R}_+H^\La Y\up{R}_+  +Y\up{R}_+H^{\La^c} +Y\up{R}_+\Gamma^{\La} Y\up{R}_+ \\
& \ge Y\up{R}_+H^\La Y\up{R}_+ +Y\up{R}_+\Gamma^\La Y\up{R}_+
 \ge  Y\up{R}_+H^\La Y\up{R}_+ -2 Y\up{R}_+ \\
 &\ge  Y\up{R}_+H^\La Y\up{R}_+ -4 \tfd Y\up{R}_+,
\ee
where we used \eq{GammaK} and $\Delta \ge \Delta_0 > 2$.

We now use a large deviation estimate as in \cite[Proof of Lemma 3.7]{EK22}. It follows from 
 \cite[Eqs. (3.52)--(3.55)]{EK22} that for all $p\in \N$ and  $N\in \N$ there exists an event  $\cE_{p}\up{N}$ such that
 \be
 \P \set{\cE_{p}\up{N}} \ge 1 - C_p \abs{\La}^{2p}\e^{-c_\mu N}
 \ee
 and
\be
H^\La \chi_{\set{N}}(\mathcal N^\La) \ge (p+1)\tfd\chi_{\set{N}}(\mathcal N^\La)  \qtx{for} \bmo \in \cE_{p}\up{N}.
\ee
Thus, setting $\cE_{k,R}= \bigcap_{N=R+1}^{\abs{\La}}\cE_{k+9}\up{N}$, we have
\be
\P\set{\cE_{k,R}} \ge 1-  C_{k+9} \abs{\La}^{2k+18}\sum_{N=R+1}^{\abs{\La}}  \e^{-c_\mu N}\ge  
1 - C_{k} C_\mu \abs{\La}^{2k+15} \e^{-c_\mu R},
\ee
and
\be
Y\up{R}_+H^\La  \ge (k+9) \tfd Y\up{R}_+ \qtx{for} \bmo \in \cE_{k,R}.
\ee

SInce  we have
\be
\norm{Y\up{R}_-\Gamma^\La Y\up{R}_++Y\up{R}_+\Gamma^\La Y\up{R}_-}\le \norm{\Gamma^\La}\le 2 \le 4\tfd,
\ee
as
\be
&\norm{Y\up{R}_-\Gamma^\La Y\up{R}_++Y\up{R}_+\Gamma^\La Y\up{R}_-}^2= \norm{\pa{Y\up{R}_-\Gamma^\La Y\up{R}_++Y\up{R}_+\Gamma^\La Y\up{R}_-}^2} \\
& \quad = \norm{Y\up{R}_-\Gamma^\La Y\up{R}_+\Gamma^\La Y\up{R}_-  +Y\up{R}_+ \Gamma^\La Y\up{R}_-\Gamma^\La Y\up{R}_+ } \\
& \quad =\max \set{\norm{Y\up{R}_-\Gamma^\La Y\up{R}_+\Gamma^\La Y\up{R}_- },\norm{Y\up{R}_+\Gamma^\La Y\up{R}_-\Gamma^\La Y\up{R}_+ } }\\
&\quad  
\le \max \set{\norm{\Gamma^\La Y\up{R}_+\Gamma^\La },\norm{\Gamma^\La Y\up{R}_-\Gamma ^\La} }  \le \norm{\Gamma^\La}^2,
\ee
we conclude that for $\bmo \in \cE_{k,R}$  we have 
\be\label{eq:tildeH2}
\tilde H^\Z_{k,R}  &\ge   (k+5)\tfd Y\up{R}_-  +  (k+5)\tfd Y\up{R}_+  -4 \tfd  \\
&  =  (k+5)\tfd  -4\tfd  = (k+1) \tfd.
\ee
\end{proof}

We recall that given $S\subset \Z$,  $W_S\in  \N^0$ is the number of clusters of $S$ in $\La$, i.e., the number of connected components of  $S$.
\begin{lemma}\label{modLemma}
Let $\La \subset \Z$ be a finite interval and $k\in \N$.  Given   a collection   $\set{S_i}_{i=1}^{k+5}$ of  nonempty subsets   of  $\Lambda$ with
\beq\label{dSiSj}
\min_{i\neq j }\dist\pa{S_i,S_j}\ge 2{L} +1, \qtx{where}{L} \in \N,
\eeq 
 we have
\be\label{PQPLD}
\E \Big\|P_{ I_{\le  k}}\prod_{i=1}^{k+5} P_-^{S_i}\Big\|\le  C_{\mu,k} W_{max}\abs{\La}^{2k+15} \e^{-c^\pr_{\mu}{L}},
\ee
 where $W_{max} =\max_{i=1}^{k+5} {W}_{{S_i}}$.
\end{lemma}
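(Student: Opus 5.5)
We work on the event $\cE_{k,R}$ of Proposition \ref{lemEkR}, with $R$ chosen comparable to $L$ (say $R=\fl{L}$ or a fixed fraction of it). The complement $\cE_{k,R}^c$ contributes at most $C_kC_\mu\abs{\La}^{2k+15}\e^{-c_\mu R}$ to the expectation, since the operator norm is at most $1$. This is already of the form \eq{PQPLD}. It remains to bound $\norm{P_{I_{\le k}}\prod_i P_-^{S_i}}$ deterministically for $\bmo\in\cE_{k,R}$.

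On $\cE_{k,R}$ we have $\tilde H^\Z_{k,R}\ge (k+1)\tfd$. Since $I_{\le k}$ ends at $(k+\frac34)\tfd$, the distance from $I_{\le k}$ to $\sigma(\tilde H^\Z_{k,R})$ is at least $\frac14\tfd$. Set $G=(k+4)\tfd\, Y\up{R}_-\what Q^\Z_{\le k+4}$, so that $H=\tilde H^\Z_{k,R}-G$. For $E\in I_{\le k}$ and $\psi=P_{I_{\le k}}\psi$, we write $\psi$ via the resolvent identity as $\psi = \tilde R^\Z_{k,R,z}(H-z)\psi + \tilde R^\Z_{k,R,z}G\psi$. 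To exploit this cleanly, we represent $P_{I_{\le k}}$ through a contour integral (or a smooth function of $H$) and use $P_{I_{\le k}}=-\tilde R_{k,R,z}G P_{I_{\le k}}$ up to terms controlled by $H-z$. The cleanest route is the identity $\tilde H P_{I_{\le k}}\le$ energy bound: for normalized $\psi$ in the range of $P_{I_{\le k}}$, we have $\langle\psi,G\psi\rangle\ge \langle \psi,\tilde H\psi\rangle-\langle\psi,H\psi\rangle\ge \frac14\tfd$. From this we obtain the operator bound
\[
P_{I_{\le k}}\prod_i P_-^{S_i} \text{ is controlled by } \tilde R^\Z_{k,R,z}\, Y\up{R}_-\what Q^\Z_{\le k+4}\prod_i P_-^{S_i}.
\]
Here we use that $\prod_i P_-^{S_i}$ commutes with $\cN^\La$, but not with $\cW$ or $H$, which is the delicate point.

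The key combinatorial step is the following. A configuration with at least one particle in each $S_i$, with the $S_i$ pairwise $2L+1$ apart and at most $k+4$ clusters in total, must have a cluster spanning a gap of length at least $2L$ between two of the $S_i$, because there are $k+5$ sets. On $Y\up{R}_-$ there are at most $R<2L$ particles in $\La$, so such a long cluster cannot lie inside $\La$. Hence $Y\up{R}_-\what Q^\Z_{\le k+4}\prod_iP_-^{S_i}$ vanishes on $\La$-localized configurations. More precisely, any configuration in its range has particles in each $S_i$ together with an empty stretch of length $\ge L$ adjacent to some $S_i$. Since the product of $P_-^{S_i}$ does not commute with $\tilde R$, we move it through using \eq{tildeRkR}. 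We insert $P_+^{[S_i]_L\setminus S_i}$-type decompositions: on $P_-^{S_i}$ either the $L$-neighbourhood of $S_i$ (in the direction of a neighbouring $S_j$) is empty, in which case \eq{tildeRkR} with $r\approx L$ gives a factor $\e^{-(\ln\Delta\eta)L}$, or it is occupied. Summing over the $O(W_{max})$ boundary intervals of the $S_i$ (each a union of at most $W_{S_i}$ intervals, so \eq{tildeRkR}, stated for intervals $A$, is applied componentwise) produces the factor $W_{max}$.

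The main obstacle is this commutation step. We must turn the cluster-counting statement, which lives at the level of $\cW$ eigenbasis projections, into a norm bound for $P_{I_{\le k}}\prod P_-^{S_i}$ via the quasi-locality \eq{tildeRkR}. The occupied alternative has to be iterated, possibly $k+5$ times, until either the particle number exceeds $R$ (killed by $Y\up{R}_-$) or the cluster count exceeds $k+4$ (killed by $\what Q$). This iteration is where the constants $C_{\mu,k}$ arise. With $\eta\ge\frac14\tfd$ and $\Delta\ge5$, $\ln(\Delta\eta)>0$, yielding $\e^{-c'_\mu L}$ after choosing $R=cL$ and combining with the probability estimate.
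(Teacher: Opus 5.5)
Your outline follows the paper's proof: restrict to $\cE_{k,R}$, compare $H$ with $\tilde H^\Z_{k,R}=H+G$ where $G=(k+4)\tfd\,Y\up{R}_-\what Q^\Z_{\le k+4}$, invoke the quasi-locality \eqref{tildeRkR}, and kill the leading term by counting clusters. However, the step you call the main obstacle is where the proof actually lives, and as sketched it does not close.

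\textbf{The representation step.} The quadratic-form inequality $\langle\psi,G\psi\rangle\ge\frac14\tfd$ gives no upper bound on $\norm{\Pi\psi}$, where $\Pi=\prod_iP_-^{S_i}$; you need an exact operator identity. Since $\Pi\phi_\emptyset=0$, we have $P_{I_{\le k}}\Pi=P_{I_k}\Pi$. Take the counterclockwise contour $\Upsilon$ at distance $\frac18\tfd$ from $I_k$, and write $R_z=R^\Z_z$, $\tilde R_z=\tilde R^\Z_{k,R,z}$. The resolvent identity $R_z=\tilde R_z+R_zG\tilde R_z$, together with $\oint_\Upsilon\tilde R_z\,\d z=0$ (valid on $\cE_{k,R}$ by \eqref{eq:tildeH}), gives
\[
P_{I_k}\Pi=-\tfrac{1}{2\pi i}\,P_{I_k}\oint_\Upsilon R_z\,G\,\tilde R_z\,\Pi\,\d z .
\]
So everything reduces to bounding $\norm{G\tilde R_z\Pi}$ uniformly for $z\in\Upsilon$. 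Note that on $\Upsilon$ one only has $\eta_z\ge\frac18\tfd$, not $\frac14\tfd$, and that is what enters the rate $\ln(\Delta\eta_z)$.

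\textbf{The commutation step.} No iteration is needed; it is one telescoping step. Insert $\ident=\prod_iP_-^{[S_i]_r}+\sum_i\big(\prod_{j<i}P_-^{[S_j]_r}\big)P_+^{[S_i]_r}$ to the left of $\tilde R_z$ and use that the $P_-^{S_j}$ commute:
\[
\norm{G\tilde R_z\Pi}\le\norm{G\prod_{i}P_-^{[S_i]_r}}\,\norm{\tilde R_z}+\norm{G}\sum_i\norm{P_+^{[S_i]_r}\tilde R_zP_-^{S_i}} .
\]
Each summand is at most $CW_{S_i}\e^{-mr}$ by \eqref{tildeRkR}, after taking adjoints and splitting $S_i$ into its $W_{S_i}$ component intervals. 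The projection here is $P_+^{[S_i]_r}$ on the whole neighbourhood, $S_i$ included. Your dichotomy via $P_+^{[S_i]_L\setminus S_i}$ is not controlled by \eqref{tildeRkR}, since the particle may simply stay in $S_i$.

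\textbf{The pigeonhole step.} The first term must vanish, so the pigeonhole is needed for the enlarged sets $[S_i]_r$. Your version for the $S_i$ themselves is true but unusable, because $\tilde R_z$ sits between $G$ and $\Pi$. For $r\le L$, a cluster meeting $[S_i]_r$ and $[S_j]_r$ with $i\ne j$ contains at least $2L+2-2r$ consecutive sites between a point of $S_i$ and a point of $S_j$. These sites lie in $\La$, so you need $R<2L+2-2r$.

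Your choice $r\approx L$ violates this. For $S_i=\set{(2L+1)i}$ the sets $[S_i]_L$ are adjacent intervals. Place a pair of adjacent particles across the junction of $[S_{2m-1}]_L$ and $[S_{2m}]_L$ for each $m$, plus one isolated particle if $k+5$ is odd. This meets every $[S_i]_L$ using $k+5$ particles in $\cl{\frac{k+5}{2}}\le k+4$ clusters, and it survives $Y\up{R}_-$ once $R\ge k+5$.

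Taking $r=\fl{L/2}$ and $R=L$ repairs this. It yields $CW_{max}\e^{-mL/2}+C_kC_\mu\abs{\La}^{2k+15}\e^{-c_\mu L}$, which is \eqref{PQPLD}. The paper's own choice, $r=L$ with $R=2L+k+5$ in \eqref{ThetaPPP}, is defeated by the same example, and the same rebalancing fixes it.
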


\begin{proof} 
Note that  $P_{ I_{\le  k}}\prod_{i=1}^{k+5} P_-^{S_i}=P_{ I_{k}}\prod_{i=1}^{k+5} P_-^{S_i}$.
We can represent $P_{ I_{k}}$ as a contour integral
\be
P_{ I_{ k}}=\frac1{2\pi i}P_{ I_{ k}}\oint_\Upsilon R_{z}^{\Z}dz,
\ee
where  the contour $\Upsilon$ is defined by $\Upsilon=\set{z\in\C:\ \min_{x\in  I_{ k}}|x-z|=\frac18\tfd}$. (Note that  $\norm{P_{ I_{k}}R_{z}^{\Z}}\le \frac  8{\fd}$ for any $z\in\Upsilon$.)

Let $R=2 {L} +k+5$ and consider the event $\cE_{k,R}$ of Proposition \ref{lemEkR}.  Let $\bmo \in \cE_{k,R}$.
Using the resolvent identity    and \eqref{eq:tildeH}, we deduce that
\be
P_{ I_{ k}}=\frac{(k+4)\tfd }{2\pi i}P_{ I_{ k}}\oint_\Upsilon R^{\Z}_{z} Y_-\up{R} \what Q^\Z_{\le k+4}\tilde R^{\Z}_{k,R,z}dz.
\ee
Using  \eq{dSiSj}  (recall $\La$ is an  interval) and the choice of $R$,  we have
\beq\label{ThetaPPP}
Y\up{R}_- \prod_{i=1}^{k+5} P_-^{[S_i]^\Z_{{L}}} =Y\up{R}_- Q^\La_{>k+4 }\prod_{i=1}^{k+5}  P_-^{[S_i]^\Z_{{L}}}.
\eeq
Moreover, \eqref{eq:tildeH} and \eq{tildeRkR} yield
 \be\label{P+tildeRP}
\Big\|P_+^{[S_i]^\Z_{{L}}}\tilde  R^{\Z}_{k,R,z}P_-^{S_i}\Big\|\le C W_{{S_i}} \e^{-m_0{L}}\qtx{for} i=1,2,\ldots k+5,
\ee
where   $C=\frac  8  {\fd}$ and $m_0= \ln  \frac {\Delta-1} {8}$.
Using $P_+^{[S_i]^\Z_{{L}}}+ P_-^{[S_i]^Z_{{L}}}= {\ident}_{\cH_\Z}$ and \eq{P+tildeRP}, we have that
\be\label{PQPi2}
\Big\|{ Y_-\up{R} \what Q^\Z_{\le k+4}\tilde R^{\Z}_{k,R,z}\prod_{i=1}^{k+5} P_-^{S_i}}\Big\| 
&  \le C (k+5) W_{max}\e^{-m_0{L}}+\Big\|{ Q^Z_{\le k+4}Q_{>k +4}^\La Y_-\up{R} \prod_{i=1}^{k+5}  P_-^{[S_i]^\La_{{L}}} \what  R^{\Lambda}_{k,z}}\Big\| \\
&=  { C} (k+5)W_{max} \e^{-m_0{L}},
\ee 
where we used \eq{ThetaPPP}  and  $ Q^\Z_{\le k+4}Q_{>k+4 }^\La=0$.

Since $ \Big\|{P_{ I_{\le  k}}\prod_{i=1}^{k+5} P_-^{S_i}}\Big\|\le 1$, it follows, using \eq{PEkR} as well, that
\be\label{PPiP-}
\E \Big\|{P_{ I_{\le  k}}\prod_{i=1}^{k+  5} P_-^{S_i}}\Big\|&\le { C} (k+5)W_{max}\e^{-m_0{L}}+   C_{k} C_\mu \abs{\La}^{2k+15} \e^{-c_\mu  {R}}\\
& \le C_{\mu,k}W_{max}\abs{\La}^{2k+15} \e^{-c^\pr_{\mu}{L}}.
\ee
\end{proof}
 
 \subsection{Approximation of an  energy interval  indicator}

Given a  finite close interval $I\subset \R$, we set   $[I]_s=\set{t\in \R; \ \dist (t,I)\le s}$ for $s\ge 0$.

\begin{lemma}\label{lemfxi}
 Let $I\subset \R$ be a finite closed interval, $\xi>0$, $\theta >0$, $\Phi_\xi(x)=\sqrt{\frac{ \xi}{\pi }}\e^{-\xi x^2}$, and let $f_{\xi,\theta}=f_{\xi,\theta}\up{I}$ be the convolution of $\chi_{[I]_{2\theta}}$ and $\Phi_\xi$, 
\be\label{fxith}
f_{\xi,\theta}=f_{\xi,\theta}\up{I}=\chi_{[I]_{2\theta}} \ast \Phi_\xi.
\ee
Then
\begin{enumerate}
\item  We have
\be
&0\le f_{\xi,\theta} (x) \le 1\sqtx{for all}x\in\R,\\
&1-\tfrac 2{\sqrt{\pi \xi }\theta}\e^{-\xi \theta^2}\le f_{\xi,\theta}(x) \le 1   \sqtx{for all}x\in I, \qtx{so}\chi_I \le  
\pa{1-\tfrac 2{\sqrt{\pi \xi }\theta}\e^{-\xi \theta^2}}^{-1} \chi_I f_{\xi,\theta}, \\
&   f_{\xi,\theta}(x)\le \tfrac 2{\sqrt{\pi \xi }\theta}\e^{-\xi \theta^2}\sqtx{for all} x\in \R \setminus[I]_{3\theta}, \qtx{so}
 f_{\xi,\theta} \le   \chi_{[I]_{3\theta} } + \tfrac 2{\sqrt{\pi \xi }\theta}\e^{-\xi \theta^2} \chi_{\R \setminus [I]_{3\theta} }.
\ee

\item The Fourier transform $\hat  f_{\xi,\theta} (s)= \frac {1}{\sqrt{2\pi}}\int_{\R} f_{\xi,\theta} (x)\e^{isx}\, \d x$ satisfies
\be\label{hatfxitheta}
 \|\hat  f_{\xi,\theta} \|_1&\le   \sqrt{2\xi} (|I|+4\theta),\\
\int_{\R \setminus [-\zeta,\zeta]}\abs{\hat f_{\xi,\theta} (s)} \d s&\le  2 \sqrt {\tfrac 2 \pi}  (|I|+4\theta)\tfrac \xi \zeta \e^{- \frac {\zeta^2}{4\xi}} \qtx{for} \zeta > 0.
\ee

\item Let  
\be\label{deftildef}
\tilde   f_{\xi,\theta,\zeta}(x)=  \frac 1 {\sqrt{2\pi} }\int_{[-\zeta,\zeta]}\hat   f_{\xi,\theta}(s)\e^{-i sx} \d s,  \qtx{where}\zeta>0.
\ee
Then
\be\label{eq:fdiff5}
\norm{\tilde   f_{\xi,\theta,\zeta}-   f_{\xi,\theta}}_\infty\le   2 \sqrt {\tfrac 2 \pi}  (|I|+4\theta)\tfrac \xi \zeta \e^{- \frac {\zeta^2}{4\xi}}
\ee

\end{enumerate}
\end{lemma}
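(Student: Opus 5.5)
The plan is to prove all three parts by direct computation with the Gaussian $\Phi_\xi$, which is a probability density with Fourier transform $\hat\Phi_\xi(s)=\frac1{\sqrt{2\pi}}\e^{-s^2/(4\xi)}$.

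For part (1), nonnegativity and $f_{\xi,\theta}\le \int\Phi_\xi=1$ are immediate since $0\le\chi_{[I]_{2\theta}}\le 1$. Writing $f_{\xi,\theta}(x)=\int_{[I]_{2\theta}}\Phi_\xi(x-y)\,\d y$, we treat two cases.
\begin{enumerate}
\item If $x\in I$, the complement of $[I]_{2\theta}$ lies at distance $\ge 2\theta\ge\theta$ from $x$. Hence
\[1-f_{\xi,\theta}(x)\le \int_{|u|\ge \theta}\Phi_\xi(u)\,\d u .\]
\item If $x\notin[I]_{3\theta}$, then $[I]_{2\theta}$ lies at distance $\ge\theta$ from $x$, and the same tail bounds $f_{\xi,\theta}(x)$.
\end{enumerate}
The Gaussian tail estimate
\[\int_{|u|\ge\theta}\Phi_\xi(u)\,\d u\le 2\sqrt{\tfrac{\xi}{\pi}}\int_\theta^\infty \tfrac u\theta \e^{-\xi u^2}\,\d u=\tfrac{1}{\sqrt{\pi\xi}\,\theta}\e^{-\xi\theta^2}\]
gives the claimed bounds, even with a factor 2 to spare. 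The operator-type inequalities then follow pointwise. For $\chi_I\le (1-\epsilon)^{-1}\chi_I f_{\xi,\theta}$ we divide, assuming the prefactor is positive, which is implicit. For the last inequality we split $\R$ into $[I]_{3\theta}$, where $f_{\xi,\theta}\le1$, and its complement.

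For part (2), by the convolution theorem with this normalization, $\hat f_{\xi,\theta}=\sqrt{2\pi}\,\hat\chi_{[I]_{2\theta}}\hat\Phi_\xi$. We use $|\hat\chi_{[I]_{2\theta}}(s)|\le \frac{1}{\sqrt{2\pi}}|[I]_{2\theta}|=\frac{1}{\sqrt{2\pi}}(|I|+4\theta)$. This yields
\[|\hat f_{\xi,\theta}(s)|\le \tfrac{1}{\sqrt{2\pi}}(|I|+4\theta)\e^{-s^2/(4\xi)}.\]
Integrating over $\R$ gives $\frac{1}{\sqrt{2\pi}}(|I|+4\theta)\sqrt{4\pi\xi}=\sqrt{2\xi}(|I|+4\theta)$, which is the first bound. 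For the tail we use
\[\int_{|s|>\zeta}\e^{-s^2/(4\xi)}\,\d s\le 2\int_\zeta^\infty\tfrac s\zeta \e^{-s^2/(4\xi)}\,\d s=\tfrac{4\xi}{\zeta}\e^{-\zeta^2/(4\xi)} .\]
Multiplying by $\frac1{\sqrt{2\pi}}(|I|+4\theta)$ gives $2\sqrt{2/\pi}\,(|I|+4\theta)\frac\xi\zeta \e^{-\zeta^2/(4\xi)}$, as claimed.

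For part (3), $f_{\xi,\theta}$ is continuous and $\hat f_{\xi,\theta}\in L^1$ by part (2). Fourier inversion therefore holds pointwise:
\[f_{\xi,\theta}(x)=\frac1{\sqrt{2\pi}}\int_\R\hat f_{\xi,\theta}(s)\e^{-isx}\,\d s ,\]
with signs matching the convention in (2). Hence
\[|\tilde f_{\xi,\theta,\zeta}(x)-f_{\xi,\theta}(x)|\le \frac1{\sqrt{2\pi}}\int_{|s|>\zeta}|\hat f_{\xi,\theta}(s)|\,\d s ,\]
and the tail bound from (2) gives the estimate, in fact with an extra factor $1/\sqrt{2\pi}$.

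There is no real obstacle here. The only care needed is in keeping the Fourier normalization and sign convention consistent, so that the inversion formula in (3) matches the definition in (2).
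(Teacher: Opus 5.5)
Your proof is correct and follows the paper's argument essentially step for step. You use the same Gaussian tail (Mills-ratio) bound in (1), the same factorization $\hat f_{\xi,\theta}=\sqrt{2\pi}\,\hat\chi_{[I]_{2\theta}}\hat\Phi_\xi$ with $|\hat\chi_{[I]_{2\theta}}|\le (2\pi)^{-1/2}(|I|+4\theta)$ in (2), and the same truncated Fourier inversion in (3). Your explicit justification of pointwise inversion, and your remark that the constants in (1) and (3) have room to spare, are both accurate.
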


\begin{proof}
Since $0\le \chi_{[I]_{2\theta}} \le 0$, we have $0\le   f_{\xi,\theta}(x)\le \int_\R \Phi_\xi(y)dy=1$. Moreover,  for $x\in I$,
\be
1-  f_{\xi,\theta}(x)&=\int_\R(1-\chi_{[I]_{2\theta}}(y))\Phi_\xi(x-y)dy\le\int_{\R \setminus [I]_{2\theta}}\Phi_\xi(x-y)dy  \\
& \le  2\int_\theta^\infty \Phi_\xi (y)dy 
\le   \tfrac 2{\sqrt{\pi \xi }\theta}\e^{-\xi \theta^2},
\ee
where we used the standard estimate for the  Mills ratio of normal distributions.

Finally, for  $x\in \R \setminus [I]_{3\theta}$, we have
\be
  f_{\xi,\theta}(x)\le\int_{[I]_{2\theta}}\Phi_\xi(x-y)dy\le  2\int_\theta^\infty \Phi_\xi(y)dy   \le    \tfrac 2{\sqrt{\pi \xi }\theta}\e^{-\xi \theta^2}.
\ee
This establishes property (i).

We now turn to property (ii).  Since $\hat   f_{\xi,\theta}=  \sqrt{2\pi} \hat \chi_{[I ]_{2\theta} } \hat \Phi_\xi$
and $\hat \Phi_\xi(s)= \frac {1}{\sqrt{2\pi}} \e^ {-\frac {s^2}{4\xi} }$, we can bound
\be\label{eq:f1}
\norm{\hat   f_{\xi,\theta}}_1\le\sqrt{2\pi}  \norm{\hat \chi_{[I ]_{2\theta}}}_\infty \|\hat \Phi_\xi\|_1\le 
 \norm{\chi_{[I ]_{2\theta}}}_1\|\hat \Phi_\xi\|_1\le   \sqrt{2 \xi}(|I|+4\theta).
\ee
Moreover,
 \be\label{eq:rem}
&\int_{\R \setminus [-\zeta,\zeta]} |\hat   f_{\xi,\theta}(s)|ds \le \sqrt{2\pi}   \|\hat \chi_{[I ]_{2\theta}}\|_\infty \int_{\R \setminus [-\zeta,\zeta]}|\hat \Phi_\xi(s)\d s  \le \tfrac 1 {\sqrt{2\pi}}  \|\chi_{[I ]_{2\theta}}\|_1 
\int_{\R \setminus [-\zeta,\zeta]} \e^{-\frac {s^2}{4\xi}}\d s\\
& \quad  \le 2  (|I|+4\theta) {\sqrt{\tfrac \xi\pi}} \int_{\frac \zeta{\sqrt{2\xi}}}^\infty \e^{-\frac {s^2}{2}}\d s
\le 2 (|I|+4\theta) {\sqrt{\tfrac \xi \pi}} \tfrac {\sqrt{2 \xi}} \zeta \e^{- \frac {\zeta^2}{4\xi}}= 2 \sqrt {\tfrac 2 \pi}  (|I|+4\theta)\tfrac \xi \zeta \e^{- \frac {\zeta^2}{4\xi}}.
\ee
This establishes property (ii).

It remains to prove property (iii).  It follows from \eq{deftildef} that
\be\label{eq:fdiff9}
\norm{\tilde   f_{\xi,\theta,\zeta}-   f_{\xi,\theta}}_\infty\le  \frac 1 {\sqrt{2\pi} }\int_{\R \setminus [-\zeta,\zeta]}  |\hat   f_{\xi,\theta}(s)|ds\le  2 \sqrt {\tfrac 2 \pi}  (|I|+4\theta)\tfrac \xi \zeta \e^{- \frac {\zeta^2}{4\xi}},
\ee 
which is \eq{eq:fdiff5}.
\end{proof}

\begin{lemma}\label{lemfPU}
Fix parameters $\Delta_0>1$ and  $ \lambda_0 >0$.  Let  $q\in \frac 12 \N^0$ and assume that $\Delta \ge \Delta_0$ and $\lambda \ge \lambda_0$ with  $\lambda \Delta^2\ge Y_{q+\frac 12}$.
Let $\xi\in \N $ and set $f_\xi= f_{\xi,\frac 1 6 \nfd} $  and $\tilde f_\xi= f_{\xi,\frac 1 6 \nfd, \frac \Delta 8 \xi} $ as in Lemma \ref{lemfxi} with $I=\tilde I_q$, where $\tilde I_q= [0, (q +\frac 34 )\nfd]$.   Then:

\begin{enumerate}
\item We have
\be\label{hatfxi}
 \|\hat  f_{\xi} \|_1&\le   \sqrt{2\xi}( q + \tfrac 32 )\tfd
 \ee
and
\be\label{eq:fdiff59}
\norm{\tilde   f_{\xi}-   f_{\xi}}_\infty\le   2 \sqrt {\tfrac 2 \pi}  ( q + \tfrac 32 )\tfd\tfrac 8 \Delta  \e^{- \frac {\Delta^2}{256}\xi}\le C_q \e^{- c\xi}.
\ee

\item  Let $K\subset \La \subset \Z$, where $\La$ and $K$ are intervals (we allow $K=\emptyset$ and $\La=\Z$),
and  set    $f_\xi^{\La \upharpoonright K}= f_\xi(H^{\La \upharpoonright K})$ and $\tilde f_\xi^{\La \upharpoonright K}=\tilde  f_\xi(H^{\La \upharpoonright K})$. Let $S$ and $U$ be observables on on $\La$.   Then
\be\label{eq:fdiff77}
&\norm{ (S)_{f_\xi^{\La \upharpoonright K}}-(S)_{\tilde f_\xi^{\La \upharpoonright K}} } \le  C_q \norm{S}\e^{- \frac {\Delta^2}{512}\xi} \le C_q\norm{S}\e^{- c\xi},
\ee 
\be\label{(S)P}
\norm { (S)_{P^{\La \upharpoonright K}_{I_{\le q}}}}\le 2\norm{\pa{S}_{f_\xi^{\La \upharpoonright K}}} \qtx{for} \xi \ge C_1,
\ee
and 
\be\label{YSF}
\norm{U\pa{S}_{f_\xi^{\La}}}
 \le \norm{U\pa{S}_{P^{\La \upharpoonright K}_{I_{\le q+\frac 12}}}} +C \e^{- c \xi}\norm{U} \norm{S}.
\ee
\end{enumerate}
\end{lemma}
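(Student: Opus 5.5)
Most of Lemma \ref{lemfPU} follows from Lemma \ref{lemfxi} with $I=\tilde I_q$ and $\theta=\frac16\tfd$, together with functional calculus and the gap structure of the spectrum.

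\emph{Part (i).} We have $|I|+4\theta=(q+\frac34)\tfd+\frac23\tfd\le (q+\frac32)\tfd$, so \eqref{hatfxi} is exactly \eqref{hatfxitheta}. For \eqref{eq:fdiff59}, plug $\zeta=\frac\Delta8\xi$ into \eqref{eq:fdiff5}. This gives $\frac\xi\zeta=\frac8\Delta$ and $\frac{\zeta^2}{4\xi}=\frac{\Delta^2}{256}\xi$. Since $\Delta\ge\Delta_0$, the prefactor is bounded by $C_q$.

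\emph{Estimate \eqref{eq:fdiff77}.} Write
\[
(S)_f-(S)_{\tilde f}=(f-\tilde f)S f+\tilde f S(f-\tilde f),
\]
with functions evaluated at $H^{\La\upharpoonright K}$. Then $\norm{f_\xi}\le1$ and $\norm{\tilde f_\xi}\le 1+C_q\e^{-c\xi}$, so
\[
\norm{(S)_f-(S)_{\tilde f}}\le C_q\norm S\e^{-\frac{\Delta^2}{256}\xi}.
\]
This is stronger than the stated $\frac{\Delta^2}{512}$ rate.

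\emph{Estimate \eqref{(S)P}.} Set $P=P^{\La\upharpoonright K}_{I_{\le q}}$. Since $H^{\La\upharpoonright K}\ge0$, $P=\chi_{\tilde I_q}(H^{\La\upharpoonright K})$. By Lemma \ref{lemfxi}(i), on $\tilde I_q$ we have $f_\xi\ge 1-\epsilon_\xi$ with $\epsilon_\xi=\frac{2}{\sqrt{\pi\xi}\theta}\e^{-\xi\theta^2}$. Hence $P=gf_\xi$ with $g=\chi_{\tilde I_q}/f_\xi$ a bounded Borel function, $\norm{g(H)}\le(1-\epsilon_\xi)^{-1}$. Therefore
\[
(S)_P=g\,(S)_{f_\xi}\,g,
\]
whose norm is at most $(1-\epsilon_\xi)^{-2}\norm{(S)_{f_\xi}}\le 2\norm{(S)_{f_\xi}}$ once $\xi\ge C_1$ makes $\epsilon_\xi$ small.

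\emph{Estimate \eqref{YSF}.} This is the one needing care, because the left side uses $H^\La$ while the right side uses the decoupled $H^{\La\upharpoonright K}$ with the larger window $q+\frac12$. The plan:
\begin{itemize}
\item Use $[\tilde I_q]_{3\theta}\subset(-\infty,(q+\frac54)\tfd]=I_{\le q+\frac12}$, so Lemma \ref{lemfxi}(i) gives $f_\xi\le\chi_{I_{\le q+\frac12}}+\epsilon_\xi$ as functions.
\item The remaining problem is the mismatch between $H^\La$ and $H^{\La\upharpoonright K}$; they differ by $\Gamma^{\La\upharpoonright K}$ with norm at most $2$. I would attack this with \eqref{eq:fdiff77}: replace $f_\xi(H^\La)$ by $\tilde f_\xi(H^\La)$, then expand via Duhamel inside the Fourier integral $\tilde f_\xi(H)=\frac1{\sqrt{2\pi}}\int_{|s|\le\zeta}\hat f_\xi(s)\e^{-isH}\,ds$.
\item I am genuinely unsure how to control that Duhamel term without locality input that is not stated so far. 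The naive bound is $\norm{\e^{-isH^\La}-\e^{-isH^{\La\upharpoonright K}}}\le2|s|$, which is not small.
\end{itemize}

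I therefore suspect the intended reading is that $f_\xi^\La$ on the left of \eqref{YSF} means $f_\xi^{\La\upharpoonright K}$, i.e.\ the same operator as on the right. With that reading, set $P'=P^{\La\upharpoonright K}_{I_{\le q+\frac12}}$ and $f=f_\xi(H^{\La\upharpoonright K})$. Since $f$ and $P'$ are functions of the same operator, $f=fP'+f(1-P')$ with $\norm{f(1-P')}\le\epsilon_\xi$, and $fP'=P'h$ with $h=fP'$, $\norm h\le1$. Then
\[
U(S)_f=U\,h\,(S)_{P'}\,h+O\bigl(\epsilon_\xi\norm U\norm S\bigr),
\]
which gives the bound up to $C\e^{-c\xi}\norm U\norm S$. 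One catch: $h$ sits between $U$ and $(S)_{P'}$, so this holds cleanly when $U$ commutes with $h$, or with $\norm{U(S)_{P'}}$ replaced by $\norm{Uh(S)_{P'}}$.

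The main obstacle is exactly this step: the precise meaning of the superscript, and absorbing the outer factor $h$ next to $U$. If $\La$ versus $\La\upharpoonright K$ is truly intended, I would try to handle the boundary term $\Gamma^{\La\upharpoonright K}$ using the finite-speed input of Lemma \ref{lemB1}, or the positivity structure of \eqref{PLaKK}.
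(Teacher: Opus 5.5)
Your arguments for part (i), for \eqref{eq:fdiff77}, and for \eqref{(S)P} are correct and follow the paper's proof:
\begin{itemize}
\item Lemma~\ref{lemfxi} with $\theta=\frac16\tfd$ and $\zeta=\frac\Delta8\xi$;
\item the splitting $(S)_f-(S)_{\tilde f}=(f-\tilde f)Sf+\tilde fS(f-\tilde f)$ (your rate $\frac{\Delta^2}{256}$ is what this gives; the stated $\frac{\Delta^2}{512}$ is merely weaker);
\item $\chi_{\tilde I_q}\le(1-\epsilon_\xi)^{-1}\chi_{\tilde I_q}f_\xi$, together with $H^{\La\upharpoonright K}\ge 0$.
\end{itemize}

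On \eqref{YSF} your diagnosis is right, and the paper's proof does nothing beyond what you did. It says the bound follows `immediately' from Lemma~\ref{lemfxi}(i), i.e.\ from $f_\xi\le\chi_{I_{\le q+\frac12}}+\epsilon_\xi$. The inequality displayed in that proof has $f_\xi^{\La\upharpoonright K}$ on the left, so your matched-superscript reading is the intended one. The paper also only ever applies \eqref{YSF} with the same Hamiltonian on both sides.

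Do not try to rescue the mismatched version via Duhamel or Lemma~\ref{lemB1}, because it is false. Take $q=0$, $\Delta>13$, $\La=\{0,1\}$, $K=\{0\}$, $\omega_0=\omega_1=0$, $U=\ident$, and $S$ the projection onto $\phi^\La_{\{0,1\}}$. Then $H^\La\phi^\La_{\{0,1\}}=\phi^\La_{\{0,1\}}$ (one cluster), but $H^{\La\upharpoonright K}\phi^\La_{\{0,1\}}=2\phi^\La_{\{0,1\}}$ (two clusters). So the left side equals $f_\xi(1)^2\to1$, since $1<\frac{13}{12}\tfd$ lies in the interior of $[\tilde I_0]_{2\theta}$. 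The right side is only $C\e^{-c\xi}$.

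Your caveat about $U$ is also a genuine defect of the statement, and the paper's one-line proof passes over it: even with matched superscripts the inequality fails for general $U$. Take $\La=\{0\}$, $K=\emptyset$, $\Delta>13$, $\lambda\ge\frac1{12}$, and $\omega_0$ with $1+\lambda\omega_0=\frac{13}{12}\tfd$. Then $P^\La_{I_{\le\frac12}}=\ident$, and the two eigenvalues of $f_\xi(H^\La)$ tend to $1$ and $\frac12$. For $S$ and $U$ the projections onto $\frac1{\sqrt2}(\uparrow\rangle\pm\downarrow\rangle)$, we get $U(S)_{P}=US=0$, while $\norm{U(S)_{f_\xi}}\to\frac14\sqrt{5/8}$.

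What you actually proved is the correct form: $\norm{U(S)_{f_\xi}}\le\norm{Uh(S)_{P'}}+C\e^{-c\xi}\norm U\norm S$ with $h=f_\xi P'$. It covers both applications in the paper:
\begin{itemize}
\item $U=\ident$ in the proof of \eqref{P+taut-TK};
\item $U=\prod_iP_-^{\partial_{3\xi}[\cX]_{9i\xi}}$ in the proof of \eqref{SPP+}, where the next step only uses a bound by $\norm{UP'}\norm S$. Since $UhP'=UP'f_\xi$, you get $\norm{Uh(S)_{P'}}\le\norm{UP'}\norm S$ directly.
\end{itemize}
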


\begin{proof}
The estimates \eq{hatfxi} and   \eq{eq:fdiff59}  follow immediately from  \eq{hatfxitheta} and \eq{eq:fdiff5}, respectively, using $\abs{\tilde I_q} =  (q +\frac 34 )\nfd$.

To prove \eq{eq:fdiff77},  note that 
\be
 P_{\le q}(H^{\La \upharpoonright K})= \chi_{\tilde I_q}(H^{\La \upharpoonright K})\qtx{and} P_{\le q+\frac 12 }(H^{\La \upharpoonright K})= 
\chi_{[\tilde I_q]_{\frac 12 \nfd}}(H^{\La \upharpoonright K}),
\ee 
so it  follows from \eq{eq:fdiff59}  and $\norm{f_\xi^{\La \upharpoonright K}}_\infty=1$ that 
\be\label{eq:fdiff77998}
&\norm{ (S)_{f_\xi^{\La \upharpoonright K}}-(S)_{\tilde f_\xi^{\La \upharpoonright K}} } \le 
\norm{S} \norm{\tilde   f_{\xi}-   f_{\xi}}_\infty\pa{\norm{f_\xi}_\infty +  \norm{\tilde f_\xi}_\infty  } \\
&\quad\le  \norm{S}\norm{\tilde   f_{\xi}-   f_{\xi}}_\infty\pa{1 +
\pa{1 +\norm{\tilde   f_{\xi}-   f_{\xi}}_\infty} } \le  C_q \norm{S}\e^{- \frac {\Delta^2}{512}\xi}\le C_q\norm{S}\e^{- c\xi}.
\ee 

Using Lemma~\ref{lemfxi}(i) we immediately get  \eq{(S)P}, and, in addition,
\be\label{YSF23}
\norm{U\pa{S}_{f_\xi^{\La \upharpoonright K}}}&\le  \norm{U\pa{S}_{P^{\La \upharpoonright K}_{I_{\le q+\frac 12}}}} +C \e^{-\frac {\xi\nfd^2} {72}}\norm{U} \norm{S}\\ & 
 \le \norm{U\pa{S}_{P^{\La}_{I_{\le q+\frac 12}}}} +C \e^{- c \xi}\norm{U} \norm{S}.
\ee
\end{proof}

\subsection{Finite speed of propagation}

Let  $\La\subset \Z$ be  an arbitrary interval, possibly $\Z$, and let    $K\subset \La$ be a finite interval (possibly the empty set).
We write  $K^c=\La\setminus K$ when it is clear from the the context.

  \begin{lemma}\label{lemB1}
For every $r\in\N$, $0<\rho < \e^{-1} $, and  $ \abs{s} \le \rho \Delta r$, we have 
\beq
\norm{P_-^A \e^{isH^{\La{\upharpoonright} K}} P_+^{[A]_r}} \le  \tfrac 1{\sqrt{2\pi r}}  \e^{-\ln\pa{\frac 1{\e \rho}}r},
\eeq
for all $A \subset \La \subset \Z$ , where $A$ is a finite interval.

In particular, taking $\rho = \frac 14$ we get
\beq\label{inpart}
\norm{P_-^A \e^{isH^{\La{\upharpoonright} K}} P_+^{[A]_r}} \le  \tfrac 1{\sqrt{2\pi r}}  \e^{-\pa{\ln \frac 4{\e }} r}\le 
 \tfrac 1{\sqrt{2\pi r}}  \e^{-\frac 3 {10} r} \le C  \e^{-\frac 1 {5} r}  \qtx{for} \abs{s} \le \tfrac \Delta 4 r.
\eeq
\end{lemma}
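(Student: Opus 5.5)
The plan is to split $H' := H^{\La{\upharpoonright} K}$ into a part that can flip spins and a part that cannot, and then expand the propagator in powers of $s$. Write
\[
H' = -\tfrac{1}{2\Delta}\bD' + D',
\]
where $\bD'$ is the sum of the hopping terms $\sigma_i^+\sigma_{i+1}^-+\sigma_i^-\sigma_{i+1}^+$ over the bonds $\{i,i+1\}\subset\La$ kept in $H'$ (the bond crossing $\partial K$ is dropped). The operator $D'$ is diagonal in the canonical basis $\Phi_\La$: it collects $\cW$-type terms and $\lambda V$. Each hopping term moves one particle by one site and preserves particle number. Since $D'$ is diagonal, it commutes with every $P_\pm^{B}$, because these are diagonal in the same basis.

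Next pass to the interaction picture. Set $U(s)=\e^{isH'}$ and $G(s)=\e^{-isD'}U(s)$. Then $G$ solves
\[
G'(s)= -\tfrac{i}{2\Delta}\,\widetilde{\bD}(s)\,G(s), \qquad \widetilde{\bD}(s)=\e^{-isD'}\bD'\e^{isD'} .
\]
Each term of $\widetilde{\bD}(s)$ is still a single hop times a phase, and it maps a basis vector $\phi_B$ to a multiple of $\phi_{B'}$ with $B'$ obtained by one nearest-neighbor hop. The bound on $\widetilde{\bD}(s)$ is
\[
\norm{\widetilde{\bD}(s)}\le \norm{\bD'}\le 2 ,
\]
since the hopping terms on even bonds commute among themselves with norm $\le1$, and likewise on odd bonds. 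Because $D'$ commutes with $P_-^A$, we have $\norm{P_-^A U(s)P_+^{[A]_r}}=\norm{P_-^A G(s)P_+^{[A]_r}}$.

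Now expand $G(s)$ in its Dyson series. The $n$-th term is
\[
\Bigl(\tfrac{-i}{2\Delta}\Bigr)^n\int_{0\le s_n\le\cdots\le s_1\le s}\widetilde{\bD}(s_1)\cdots\widetilde{\bD}(s_n)\,\d s_n\cdots\d s_1 .
\]
Start from a state with no particles in $[A]_r$. To reach a state with a particle in $A$, some particle must travel from outside $[A]_r$ to $A$, a distance of at least $r$. Particles only hop by one site per application of a hop operator; this is where $\La$ being one-dimensional and $A$ an interval are used. Hence every term with $n<r$ satisfies
\[
P_-^A\,\widetilde{\bD}(s_1)\cdots\widetilde{\bD}(s_n)\,P_+^{[A]_r}=0 .
\]
This vanishing is the key step, and it is the one I expect to need the most care: one has to check that the phases from $D'$ do not spoil the locality of each hop, and that the boundary term removed at $K$ only reduces the set of available hops. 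The tail is then bounded by
\[
\sum_{n\ge r}\frac{(2|s|/(2\Delta))^n}{n!}=\sum_{n\ge r}\frac{(|s|/\Delta)^n}{n!}\le \sum_{n\ge r}\frac{(\rho r)^n}{n!}.
\]

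Finally, estimate the tail using Stirling's bound $n!\ge\sqrt{2\pi n}\,(n/\e)^n$. This gives
\[
\frac{(\rho r)^n}{n!}\le \frac{1}{\sqrt{2\pi n}}\Bigl(\frac{\e\rho r}{n}\Bigr)^n\le \frac{1}{\sqrt{2\pi r}}\,(\e\rho)^n \qquad (n\ge r),
\]
so the tail is at most $\frac{1}{\sqrt{2\pi r}}\frac{(\e\rho)^r}{1-\e\rho}$. This is off by the factor $(1-\e\rho)^{-1}$; to remove it I would sharpen the counting. One option is to use the factor $(r/n)^n$ more carefully, summing $\frac{1}{\sqrt{2\pi r}}\sum_{n\ge r}(\e\rho)^n (r/n)^n$. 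Another is to normalize $\bD'$ so that its effective norm per hop is exactly $1$. If neither works, I would accept a constant; that still gives \eqref{inpart}, where the final bound carries a generic $C$ anyway.

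For \eqref{inpart}, take $\rho=\frac14$. Then $\ln\frac{4}{\e}\approx0.386>\frac3{10}$, and
\[
\frac{1}{\sqrt{2\pi r}}\,\e^{-\frac3{10}r}\le C\,\e^{-\frac15 r}
\]
follows immediately.
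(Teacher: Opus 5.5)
Your argument breaks at the bound $\norm{\widetilde{\bD}(s)}\le\norm{\bD'}\le 2$. Commuting terms of norm $\le 1$ acting on disjoint tensor factors do not sum to an operator of norm $\le 1$; their norms add. For instance, let $\psi$ be the product, over a family of disjoint even bonds, of $\tfrac1{\sqrt2}(\uparrow\downarrow+\downarrow\uparrow)$, with all other spins up. Each such factor is an eigenvector of its bond hopping term with eigenvalue $1$, and its one-site marginals give $\langle\sigma^\pm\rangle=0$. Hence all other bond terms have zero expectation, and $\langle\psi,\bD'\psi\rangle$ equals the number of chosen bonds. So $\norm{\bD'}$ grows like $\abs{\La}$, and on $\cH_\Z$ the operator $\bD$ is unbounded (its norm on the $N$-particle sector is of order $N$). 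Your Dyson tail is therefore only bounded by $\sum_{n\ge r}\pa{\abs{s}\norm{\bD'}/2\Delta}^n/n!$. This is volume dependent and meaningless for $\La=\Z$, which is exactly the case the paper needs. Your observation that terms with fewer than $r$ hops vanish is correct, but it does not rescue the estimate. The surviving terms contain arbitrarily many hops far from $[A]_r$, and a norm bound on the full series pays for all of them.

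The missing idea is to keep only the hops that cross $\partial[A]_j$. The diagonal part, and every hop not crossing $\partial[A]_j$, commute with $P_+^{[A]_j}$. So $[H^{\La\upharpoonright K},P_+^{[A]_j}]$ involves at most two bond terms, has norm $\le\frac1\Delta$, and has range in $\operatorname{Ran} P_+^{[A]_{j-1}}$. Since $P_-^AP_+^{[A]_j}=0$, Duhamel's formula gives
\[
P_-^A\e^{isH^{\La\upharpoonright K}}P_+^{[A]_j}= i\int_0^s P_-^A\e^{ixH^{\La\upharpoonright K}}P_+^{[A]_{j-1}}\,[H^{\La\upharpoonright K},P_+^{[A]_j}]\,\e^{i(s-x)H^{\La\upharpoonright K}}\,\d x .
\]
Set $g_j(s)=\sup_{\abs{x}\le\abs{s}}\norm{P_-^A\e^{ixH^{\La\upharpoonright K}}P_+^{[A]_j}}$. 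Then $g_j(s)\le\frac1\Delta\int_0^{\abs s}g_{j-1}(x)\,\d x$, and iterating from $g_0\le1$ yields $\norm{P_-^A\e^{isH^{\La\upharpoonright K}}P_+^{[A]_r}}\le\frac{\abs{s}^r}{\Delta^r r!}$, uniformly in $\La$ and $K$. This is the bound of \cite[Lemma B.1]{EK22} that the paper invokes. It is a single term rather than a tail, so Stirling gives exactly $\frac1{\sqrt{2\pi r}}(\e\rho)^r$, with no factor $(1-\e\rho)^{-1}$. Accepting such a constant, as you propose, would not prove the lemma as stated. It would only give the final bound $C\e^{-\frac15 r}$ in \eqref{inpart}.
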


\begin{proof}  
It follows from  \cite[Lemma B.1]{EK22}, which holds for  $  H^{\La{\upharpoonright} K}$ with $\gamma=\frac 1 \Delta$, that
\be
 \norm{P_-^A \e^{isH^{\La{\upharpoonright} K}} P_+^{[A]_r}}\le \tfrac { \abs{s}^r} {\Delta^{r}r!} .
 \ee
If $ \abs{s} \le \rho \Delta r$, we get, using Stirling's formula, that
\be
\norm{P_-^A \e^{isH^{\La{\upharpoonright} K}} P_+^{[A]_r}}\le  \rho^r r^r \pa{\sqrt{2\pi r} \e^{-r}r^r}^{-1}\le \tfrac 1{\sqrt{2\pi r}}  \e^{-\pa{\ln \frac 1{\e \rho}} r}.
\ee  
\end{proof}

\section{ Proof of the main result}
 
Fix parameters $\Delta_0>1$ and  $ \lambda_0 >0$.  Let  $q\in \frac 12 \N^0$, set $p=q + \frac 12$, and assume that $\Delta \ge \Delta_0$ and $\lambda \ge \lambda_0$ with  $\lambda \Delta^2\ge Y_{p}$.
Let $\xi\in \N $ and let $f_\xi$  and $\tilde f_\xi $ be  as in Lemma \ref{lemfPU}. 

Let $T$  be an observable supported on a finite  interval $\mathcal X\subset \Z$ with $\norm{T}\le 1$.  Given $\ell \in \N$ and $t \in \R$, we   take 
 \be\label{defxi}
 \xi =\max \set{  \left\lceil\tfrac {\alpha_{p}} 5\right\rceil \ell, \left\lceil\tfrac 8 \Delta\abs{t}\right\rceil , \cl{C_1}}  , 
 \ee
 where $ {\alpha_{p}} $ is as in Theorem \ref{thm:localmodell} and $C_1$ is the constant in \eq{(S)P}. We observe that
 $[\cX]_{ {\alpha_{p}}\ell} \subset  [\cX]_{ 5\xi}$.   
It follows from Theorem \ref{thm:localmodell} that there exists a random observable ${T}_t={T}(t,p,\ell)$, supported on $ [\cX]_{ 5\xi}$,  satisfying \eq{T_t} and \eq{eq:locality2mgq} with $p$ substituted for $q$.
 
   We will prove that  ${T}_t={T}(t, p,\ell)$ satisfies \eqref{eq:main2}, proving the theorem. The proof  relies on the three guidance principles introduced in the previous section, and proceeds in several steps, as follows:

\begin{enumerate}
\item  
We  start by  establishing the bound
\be\label{SPP+}
\E\norm{\pa{S}_{P^\Z_{I_{\le q}}}}  \le  2\sum_{j=1}^{ \cl{p}+5}  \E\norm{P_+^{\partial_{3{\xi}}[\mathcal X]_{9j{\xi}}} \pa{S}_{f_\xi^\Z}} 
+    C_{q,\abs{\cX}}  \norm{S}  \e^{-c_q \xi},
\ee
for all  observables $S$ on  $\Z$ and finite intervals   $\mathcal X\subset \Z$.

\item We use the estimates from  Theorem \ref{thm:localmodell}, in the form
\be\label{normTt}
 \norm{T_t}\le   C_{p} \langle t\rangle^{\gamma_{p}},
  \ee 
 and
 \be\label{eq:locres}
 &\E\norm{ \pa{\tau^{[\mathcal X]_{9j\xi}}_t(T)-{T}_t}_{P_{I_{\le  p}}^{[\mathcal X]_{9j\xi}}} }
\le  C_{p}\langle t\rangle^{\gamma_{p}}\pa{\abs{\cX}+ 18j \xi}^{\rho_{p} }\e^{- \theta_{p}\, \ell}\\
&   \hskip50pt   \le C_{q,\abs{\cX}}\langle t\rangle^{\gamma^\pr_{q}}e^{- \frac 1 2 \theta_{p}\ell}
\qtx{for all} j=1,2,\dots,  \cl{p} +5,
\ee
where $\gamma^\pr_{q}= \gamma_{p}+\rho_{p} \ge \gamma_{p}$ ,
to derive the estimate
\be\label{P+taut-TK}
\E\norm{ \pa{\tau^{\upharpoonright {[\mathcal X]_{9j\xi}}}_t(T)-{T}_t}_{f_\xi^{\upharpoonright {[\mathcal X]_{9j\xi}}}}}
\le C_{q,\abs{\cX}}\langle t\rangle^{\gamma^\pr_{q}}e^{- \theta^{\pr}_{q} \ell}\mqtx{for all} j=1,2,\dots,  \cl{p} +5.
\ee

\item We use  \eq{P+taut-TK}
to  prove the estimate
\be\label{P+tauTTt5}
&\E\norm{ P_+^{\partial_{3{\xi}}[\mathcal X]_{9j{\xi}}}\pa{\tau_t(T)-{T}_t}_{f^\Z_\xi}}\\ & 
\qquad  \le \E\norm{ P_+^{\partial_{3{\xi}}[\mathcal X]_{9j{\xi}}}\pa{\pa{\tau_t(T)-{T}_t}_{\tilde f^\Z_\xi}-  \pa{\tau^{\upharpoonright {[\mathcal X]_{9j\xi}}}_t(T)-{T}_t}_{\tilde f_\xi^{\upharpoonright {[\mathcal X]_{9j\xi}}}}  } P_+^{\partial_{{\xi}}[\mathcal X]_{9j{\xi}}}} \\
&  \hskip90pt  +C_{q,\abs{\cX }} \langle t\rangle^{\gamma^\pr_{q}}e^{-\theta^{\prr\pr}_{q}\ell} \qtx{for all} j=1,2,\dots,  \cl{p} +5.
\ee

\item  We  show that
\be\label{Iest2}
&\E\norm{ P_+^{\partial_{3{\xi}}[\mathcal X]_{9j{\xi}}}\pa{\pa{\tau_t(T)}_{\tilde f_\xi}-  \pa{\tau^{\upharpoonright {[\mathcal X]_{9j\xi}}}_t(T)}_{\tilde f_\xi^{\upharpoonright {[\mathcal X]_{9j\xi}}}}  } P_+^{\partial_{{\xi}}[\mathcal X]_{9j{\xi}}}}\\
& \qquad +\E\norm{ P_+^{\partial_{3{\xi}}[\mathcal X]_{9j{\xi}}}\pa{\pa{{T}_t}_{\tilde f_\xi}-  \pa{{T}_t}_{\tilde f_\xi^{\upharpoonright {[\mathcal X]_{9j\xi}}}}  } P_+^{\partial_{{\xi}}[\mathcal X]_{9j{\xi}}}} \le C_q  \langle t\rangle^{\gamma_{p}+1}e^{-c\xi}.
\ee

\item

Finally, combining \eqref{defxi}, \eq{SPP+}, \eq{normTt}, \eqref{P+tauTTt5},   and \eqref{Iest2}, we get \eqref{eq:main2}.
\end{enumerate}

We start by  proving \eq{SPP+}. 
Let $S$ be an observable on $\Z$ and  $\mathcal X\subset \Z$   a finite  interval.  It follows from \eq{(S)P}, 
 the identity
\be
\mathds{1}_{\cH_\Z}=\prod_{i=1}^{  k+5}P_-^{\partial_{3{\xi}}[\mathcal X]_{9i{\xi}}}\,+\,\sum_{j=1}^{  k+5}P_+^{\partial_{3{\xi}}[\mathcal X]_{9j{\xi}}} \prod_{i=1}^{j-1}P_-^{\partial_{3{\xi}}[\mathcal X]_{9i{\xi}}},
\ee  
where $  {\prod_{i=1}^{0}P_-^{\partial_{3{\xi}}[\mathcal X]_{9i{\xi}}} = \ident_{\cH_\Z}}$,  and \eq{YSF} that
\be
&\E\norm{\pa{S}_{P^\Z_{I_{\le q}}}} \le 2 \E\norm{\pa{S}_{f_\xi^\Z}} \le 
\\&      \le 2 \E\norm{\sum_{j=1}^{ \cl{p}+5}\prod_{i=1}^{  j-1}P_-^{\partial_{3{\xi}}[\mathcal X]_{9i{\xi}}}P_+^{\partial_{3{\xi}}[\mathcal X]_{9j{\xi}}} \pa{S}_{f_\xi^\Z}}+ 
2\E\norm{\prod_{i=1}^{ \cl{p} +5}P_-^{\partial_{3{\xi}}[\mathcal X]_{9i{\xi}}}\pa{S}_{f_\xi^\Z}}
\\ & \le 2\sum_{j=1}^{ \cl{p}+5}  \E\norm{P_+^{\partial_{3{\xi}}[\mathcal X]_{9j{\xi}}} \pa{S}_{f_\xi^\Z}} + 2\E\norm{\prod_{i=1}^{ \cl{p} +5}P_-^{\partial_{3{\xi}}[\mathcal X]_{9i{\xi}}}\pa{S}_{P^\Z_{I_{\le  p}}}}   +C\norm{S} \e^{-c \xi }.
\ee

We now apply  Lemma \ref{modLemma} with $k=\cl{p}$, $\La=  [\mathcal X]_{9(q+ 8){\xi}}$,  $S_i={\partial_{3{\xi}}[\mathcal X]_{9i{\xi}}}$ for $i=1,2,\ldots, \cl{p} +5$, $W_{max}= 2$, and   $L= \fl{\frac{3\xi-1}2}\ge \xi$,   getting
\be
&\E\norm{\prod_{i=1}^{ \cl{p} +5}P_-^{\partial_{3{\xi}}[\mathcal X]_{9i{\xi}}}\pa{S}_{P^\Z_{I_{\le  p}}}}\le  2C_{\cl{q+ \frac 12}}\norm{S}\pa{\abs{\cX}+ 18(q+ 8){\xi}}^{2(\cl{p}) +15} }\e^{- c_{\cl{p}}{\xi} \\
& \qquad  \qquad \le  \norm{S} C^\pr_{q,\abs{\cX}}{\xi}^{2(\cl{p}) +15} \e^{- c_{\cl{p}}{\xi} }
\le  \norm{S} C^{\prr}_{q,\abs{\cX}}\e^{-\frac 12 c_{\cl{p}}\xi}.
\ee
 The equation \eq{SPP+} follows.

To prove \eq{P+taut-TK}, we 
fix $j\in\set{1,2,\dots,  \cl{p} +5}$.  Using  \eq{YSF}, \eq{PLaKK}, \eq{normTt}, \eq{suppsK}, \eq{eq:locres}, we get
\be\label{P+taut-T222}
& \E\norm{{f_\xi^{\upharpoonright {[\mathcal X]_{9j\xi}}}} \pa{\tau^{\upharpoonright {[\mathcal X]_{9j\xi}}}_t(T)-{T}_t}{f_\xi^{\upharpoonright {[\mathcal X]_{9j\xi}}}}}\\
& \qquad  \le \E\norm{ {P_{I_{\le  p}}^{\upharpoonright {[\mathcal X]_{9j\xi}}}}\pa{\tau^{\upharpoonright {[\mathcal X]_{9j\xi}}}_t(T)-{T}_t}{P_{I_{\le  p}}^{\upharpoonright {[\mathcal X]_{9j\xi}}}}} +C(1+\norm{T_t}) \e^{-c \xi}\\
& \qquad = \E\norm{ {P_{I_{\le p }}^{[\mathcal X]_{9j\xi}}P_{I_{\le p }}^{\upharpoonright {[\mathcal X]_{9j\xi}}}}\pa{\tau^{\upharpoonright  [\mathcal X]_{9j\xi}}_t(T)-{T}_t}{P_{I_{\le p }}^{[\mathcal X]_{9j\xi}}P_{I_{\le p }}^{\upharpoonright {[\mathcal X]_{9j\xi}}}}} + C_{q}  \langle t\rangle^{\gamma_{p}} \e^{- c \xi}\\
& \qquad \le \E\norm{ {P_{I_{\le p }}^{[\mathcal X]_{9j\xi}}}\pa{\tau^{[\mathcal X]_{9j\xi}}_t(T)-{T}_t}{P_{I_{\le p }}^{[\mathcal X]_{9j\xi}}}} +C_{q}  \langle t\rangle^{\gamma_{p}} \e^{- c \xi}\\
& \qquad \le   C_{q,\abs{\cX}}\langle t\rangle^{\gamma^\pr_{q}}e^{- \frac 1 2 \theta_{p} \ell} +C_{q}  \langle t\rangle^{\gamma_{p}} \e^{- c \xi}
\le C_{q,\abs{X}}^\pr \langle t\rangle^{\gamma_{q}^\pr}e^{-\theta^\pr_{q}\, \ell},
\ee
which is \eq{P+taut-TK}.

We now turn to the proof of \eq{P+tauTTt5}.    Let $Y$ be an observable on $\Z$ with $\norm{Y}\le 1$.   Using first \eq{P+taut-TK}, followed by \eqref{eq:fdiff77},\eqref{normTt}, and \eq{defxi},  we get
\be\label{Pfxitau}
&\E\norm{Y\pa{\tau_t(T)-{T}_t}_{f_\xi}}\\
& \;  \le 
\E\norm{Y\pa{\pa{\tau_t(T)-{T}_t}_{f_\xi}-  \pa{\tau^{\upharpoonright {[\mathcal X]_{9j\xi}}}_t(T)-{T}_t}_{f_\xi^{\upharpoonright {[\mathcal X]_{9j\xi}}}}  } }  +  C_{q,\abs{X}}^\pr \langle t\rangle^{\gamma_{q}^\pr}\e^{-\theta^\pr_{q}\, \ell} \\
& \;  \le 
\E\norm{Y\pa{\pa{\tau_t(T)-{T}_t}_{\tilde f_\xi}-  \pa{\tau^{\upharpoonright {[\mathcal X]_{9j\xi}}}_t(T)-{T}_t}_{\tilde f_\xi^{\upharpoonright {[\mathcal X]_{9j\xi}}}}  } }  +  C_q  \langle t\rangle^{\gamma_{p}}  \e^{-c\xi} +  C_{q,\abs{X}}^\pr \langle t\rangle^{\gamma_{q}^\pr}\e^{-\theta^\pr_{q}\, \ell}\\
& \; \le 
\E\norm{Y\pa{\pa{\tau_t(T)-{T}_t}_{\tilde f_\xi}-  \pa{\tau^{\upharpoonright {[\mathcal X]_{9j\xi}}}_t(T)-{T}_t}_{\tilde f_\xi^{\upharpoonright {[\mathcal X]_{9j\xi}}}}  } }  +  C_{q,\abs{X}} \langle t\rangle^{\gamma_{q}^\pr} \e^{-\theta^{\prr}_{q}\, \ell}.
\ee

 We now let  $\#$ be either $\Z$  or ${\upharpoonright \!\![\mathcal X]_{9j\xi}}$, and 
 consider $\tilde f_\xi^\#$, $H^\#$, $\tau^\#_t(\cdot)$.   Let $S$ be either $T$  or $T_t$.   Since $S$ is supported by the interval  $[\cX]_{ 5\xi}$,  we have
\be\label{SPpm}
[S, P_\pm^{\partial_{2\xi}[\mathcal X]_{9j{\xi}}}]=0.
\ee
Let $s\in [-t,t]$, so  \eq{defxi} yields $\abs{s}\le \abs{t} \le \frac \Delta 8 \xi$.
Then it follows from  \eq{inpart} in Lemma \ref{lemB1}, using $\ident=P_+^{\partial_{2{\xi}}[\mathcal X]_{9j{\xi}}}+P_-^{\partial_{2{\xi}}[\mathcal X]_{9j{\xi}}}$ and \eq{SPpm},  that
\be
&\E\norm{ P_+^{\partial_{{3\xi}}[\mathcal X]_{9j{\xi}}}
\pa{\tau_s^\#(S)}_{\tilde f_\xi^\#}P_-^{\partial_{{\xi}}[\mathcal X]_{9j{\xi}}} }\\
& \quad \le \norm{S} \pa{ \E\norm{ P_+^{\partial_{{3\xi}}[\mathcal X]_{9j{\xi}}}\tilde f_\xi^\#\e^{isH^\#}P_-^{\partial_{2{\xi}}[\mathcal X]_{9j{\xi}}}} +\E\norm{ P_+^{\partial_{{2\xi}}[\mathcal X]_{9j{\xi}}} \e^{-isH^\#}{\tilde f_\xi^\#}P_-^{\partial_{{\xi}}[\mathcal X]_{9j{\xi}}} }}\\
&\quad \le \tfrac 1 {\sqrt{2\pi}}\norm{S}  \int_{[-\frac \Delta 8 \xi,\frac \Delta 8 \xi]}\left(\E\norm{ P_+^{\partial_{{3\xi}}[\mathcal X]_{9j{\xi}}}\e^{i(s-u)H^\#}P_-^{\partial_{2{\xi}}[\mathcal X]_{9j{\xi}}}} \right.  \\
& \hskip140pt
 \left.  + \E\norm{ P_+^{\partial_{{2\xi}}[\mathcal X]_{9j{\xi}}} \e^{-i(u+s)H^\#}P_-^{\partial_{{\xi}}[\mathcal X]_{9j{\xi}}} }\right)  \abs{\hat   f_{\xi}(u)} \, \d u\\
 & \quad   \le   C \norm{S} \norm{\hat   f_{\xi}}_1 \e^{-\frac 2 {5} \xi} \le   C \norm{S}  \sqrt{ 2\xi} (q +\tfrac 32 )\tfd\e^{-\frac 2 {5} \xi}
 \le  C^\pr_q \norm{S}\e^{-\frac 1 {5} \xi} ,
\ee
where we used also \eq{hatfxi}.
This, together with \eqref{normTt}, yields the estimate
\be\label{EPtauTt}
\E\norm{ P_+^{\partial_{{3\xi}}[\mathcal X]_{9j{\xi}}}
\pa{\tau^\#_t(T)-{T}_t}_{\tilde f_\xi^\#}P_-^{\partial_{{\xi}}[\mathcal X]_{9j{\xi}}} } \le  C_q  \langle t\rangle^{\gamma_{p}}
  \e^{-\frac 1 {5} \xi}.
\ee

It  now  follows from  \eq{Pfxitau}, using the  identity $\ident=P_+^{\partial_{{\xi}}[\mathcal X]_{9j{\xi}}}+P_-^{\partial_{{\xi}}[\mathcal X]_{9j{\xi}}}$, the triangular inequality, and \eq{EPtauTt}, that
\be\label{Pfxitau24}
&\E\norm{P_+^{\partial_{{3\xi}}[\mathcal X]_{9j{\xi}}}\pa{\tau_t(T)-{T}_t}_{f_\xi}} \\
& \quad  \le 
\E\norm{P_+^{\partial_{{3\xi}}[\mathcal X]_{9j{\xi}}}\pa{\pa{\tau_t(T)-{T}_t}_{\tilde f_\xi}-  \pa{\tau^{\upharpoonright {[\mathcal X]_{9j\xi}}}_t(T)-{T}_t}_{\tilde f_\xi^{\upharpoonright {[\mathcal X]_{9j\xi}}}}  } P_+^{\partial_{{\xi}}[\mathcal X]_{9j{\xi}}}} \\
&\hskip80pt + +  C^\pr_{q,\abs{X}} \langle t\rangle^{\gamma_{q}^\pr} \e^{-\theta^{\prr\pr}_{q}\, \ell},
\ee
which is \eq{P+tauTTt5}.

To establish \eqref{Iest2},
let $s\in [-t,t]$ and  let $S$ be an observable. Let $s\in [-t,t]$ and  let $S$ be an observable. Then, 
\be\label{def(III)}
& (I)= \E\norm{ P_+^{\partial_{3{\xi}}[\mathcal X]_{9j{\xi}}}\pa{\pa{\tau_s(S)}_{\tilde f_\xi}-  \pa{\tau^{{\upharpoonright {[\mathcal X]_{9j\xi}}}}_{s}(S)}_{\tilde f_\xi^{{\upharpoonright {[\mathcal X]_{9j\xi}}}} }}P_+^{\partial_{{\xi}}[\mathcal X]_{9j{\xi}}}}  \\
&\le  \tfrac 1 {{2\pi}}    \int_{-\frac \Delta 8 \xi}^{\frac \Delta 8 \xi} \int_{-\frac \Delta 8 \xi}^{\frac \Delta 8 \xi}
\E\norm{ P_+^{\partial_{3{\xi}}[\mathcal X]_{9j{\xi}}}\Psi(S,s-u,s+v)P_+^{\partial_{{\xi}}[\mathcal X]_{9j{\xi}}}}   \abs{\hat   f_{\xi}(u)} \abs{\hat   f_{\xi}(v)}\, \d u \d v,
\ee
where
\be
\Psi(S,u,v)=\e^{iuH} S \e^{-ivH} -  \e^{iuH^{{\upharpoonright {[\mathcal X]_{9j\xi}}}}} S \e^{-ivH^{{\upharpoonright {[\mathcal X]_{9j\xi}}}}}  \qtx{for} u, v \in \R.
\ee

By Duhamel's Formula, observing that 
$\Gamma^{{\upharpoonright {[\mathcal X]_{9j\xi}}}}=P_-^{\partial[\mathcal X]_{9j{\xi}}}\Gamma^{{\upharpoonright {[\mathcal X]_{9j\xi}}}} P_- ^{\partial [\mathcal X]_{9j{\xi}}}$, we have
\be
 \e^{irH}-\e^{irH^{{\upharpoonright {[\mathcal X]_{9j\xi}}}}}
&  =i \int_0^r\e^{i(r-{ x})H^{{\upharpoonright {[\mathcal X]_{9j\xi}}}}}\Gamma^{{\upharpoonright {[\mathcal X]_{9j\xi}}}} \e^{i{ x}H} \, \d { x}\\&=i \int_0^r\e^{i(r-{ x})H^{{\upharpoonright {[\mathcal X]_{9j\xi}}}}}P_-^{\partial[\mathcal X]_{9j{\xi}}}\Gamma^{{\upharpoonright {[\mathcal X]_{9j\xi}}}} P_- ^{\partial [\mathcal X]_{9j{\xi}}} \e^{i{ x}H} \, \d { x}.
\ee
 Letting
\be
\Upsilon(r,x)=\e^{i(r)H^{{\upharpoonright {[\mathcal X]_{9j\xi}}}}}P_-^{\partial[\mathcal X]_{9j{\xi}}}\Gamma^{{\upharpoonright {[\mathcal X]_{9j\xi}}}} P_- ^{\partial [\mathcal X]_{9j{\xi}}} \e^{i{x}H} \qtx{for} r,x \in \R,
\ee
and  recalling \eq{GammaK}, we get
 \be\label{(IV)}
(II)& =  
\norm{ P_+^{\partial_{3{\xi}}[\mathcal X]_{9j{\xi}}}{\Psi(S,s-u,s+v)}P_+^{\partial_{{\xi}}[\mathcal X]_{9j{\xi}}}}\\
& \le   \abs{ \int_0^{s-u} \norm{ P_+^{\partial_{3{\xi}}[\mathcal X]_{9j{\xi}}}
\pa{\Upsilon(s-u-x,x) S \e^{-i(s+v)H}}P_+^{\partial_{{\xi}}[\mathcal X]_{9j{\xi}}}}\d {x}     } \\
& \qquad + \abs{\int_0^{{-(s+v)}} \norm{ P_+^{\partial_{3{\xi}}[\mathcal X]_{9j{\xi}}}\pa{\e^{i(s-u)H^{{\upharpoonright {[\mathcal X]_{9j\xi}}}}}S 
\Upsilon(-s-v-x,-x) }P_+^{\partial_{{\xi}}[\mathcal X]_{9j{\xi}}}}\d {x} } \\
& \le 2 \norm{S}  \abs{ \int_0^{s-u} \norm{ P_+^{\partial_{3{\xi}}[\mathcal X]_{9j{\xi}}}{\e^{i(s-u-{x})H^{{\upharpoonright {[\mathcal X]_{9j\xi}}}}}P_-^{\partial[\mathcal X]_{9j{\xi}}}    }}\d {x} } \\ &\qquad +2 \norm{S}
\int_0^{-(s+v)} \abs{\norm{  P_- ^{\partial [\mathcal X]_{9j{\xi}}} \e^{-i{x}H} P_+^{\partial_{{\xi}}[\mathcal X]_{9j{\xi}}}}\d {x}}.
\ee

Using \eq{inpart} in Lemma \ref{lemB1}  we get
\be\label{E(IV)}
\E  (II)\le  2C \norm{S} \abs{t} \pa{\e^{-\frac 3 5 \xi}+\e^{-\frac 1 5 \xi}}\le 
C^\pr \norm{S}\abs{t}\e^{-\frac 1 5 \xi}.
\ee

Combining \eq{def(III)}, \eq{(IV)}, \eq{E(IV)}, and   \eq{hatfxitheta},  we get
\be\label{IIIest}
 (I)\le  \tfrac C {{2\pi}} \tfrac{ \Delta^2} {16} \pa{\sqrt{ 2\xi} ((q +\tfrac 34 )\tfd+\tfrac 23 \tfd)}^2\norm{S}\abs{t}\e^{-\frac 1 5 \xi}\le C_q \norm{S} \abs{t}\e^{-\frac 1 {10}\xi},
\ee
 for all $s\in [-t,t]$. 
Putting together  \eq{def(III)}, \eq{IIIest},  and \eq{normTt} we get \eqref{Iest2}.

Finally, combining \eqref{defxi}, \eq{SPP+}, \eqref{P+tauTTt5},   and \eqref{Iest2},   we  get \eqref{eq:main2}. 
Theorem \ref{thm:main2} is proven. 
\section*{Declarations}

\subsection*{\qquad Data availability}
We do not analyze or generate any datasets, because our work proceeds within a theoretical and mathematical approach. One can obtain the relevant materials from the references below.

\subsection *{\qquad  Funding and/or Conflicts of interests/Competing interests} \

Alexander Elgart was  supported in part by the NSF under grant DMS-2307093.

The authors have no relevant financial or non-financial interests to disclose.

The authors have no competing interests to declare that are relevant to the content of this article.
\printbibliography

\end{document}